\renewcommand{\paragraph}{%
  \@startsection{paragraph}{4}%
  {\z@}{1.0ex \@plus 0.5ex \@minus .2ex}{-1em}%
  {\normalfont\normalsize\it}%
}
\definecolor{note_fontcolor}{rgb}{0.800781, 0.800781, 0.800781}
\definecolor{ForestGreen}{rgb}{0.1333,0.5451,0.1333}
\newtheorem{theorem}{Theorem}[section]
\newaliascnt{lemma}{theorem}
\newaliascnt{assumption}{theorem}
\newaliascnt{proposition}{theorem}
\newaliascnt{corollary}{theorem}
\newaliascnt{claim}{theorem}
\newaliascnt{observation}{theorem}
\newaliascnt{definition}{theorem}
\newaliascnt{fact}{theorem}
\newaliascnt{statement}{theorem}
\newaliascnt{mechanism}{theorem}
\newaliascnt{example}{theorem}
\newaliascnt{remark}{theorem}
\newtheorem{Lemma}[lemma]{Lemma}
\newtheorem{observation}[observation]{Observation}
\newtheorem{definition}[definition]{Definition}
\theoremstyle{remark}
\numberwithin{equation}{section}
\newtheoremstyle{etplain}
{.0\baselineskip\@plus.1\baselineskip\@minus.1\baselineskip}
{.0\baselineskip\@plus.1\baselineskip\@minus.1\baselineskip}
{\itshape}
{}
{\bfseries}
{.}
{ }
{}
\newcommand{\namedref}[2]{\hyperref[#2]{#1~\ref*{#2}}}
\newcommand{\figurerefb}[2]{\hyperref[#1]{Figure~\ref*{#1}#2}}
\newcommand{\equationref}[1]{\hyperref[#1]{(\ref*{#1})}}
\renewcommand{\eqref}{\equationref}
\newcommand{\DEBUG}[1]{}
\newcommand{\argmax}{\operatornamewithlimits{arg\,max}}
\newcommand\R{\mathbb{R}}
\DeclareMathOperator*{\E}{\mathbb E}
\renewcommand\comment[1]{}
\newcommand{\vin}{\mathrm{IN}}
\newcommand{\vout}{\mathrm{OUT}}
\newcommand{\obj}{\mathrm{OBJ}}
\newcommand{\fixed}{\texttt{FIXED}\xspace}
\newcommand{\surge}{\texttt{SURGE}\xspace}
\newcommand{\dynam}{\texttt{DYNAM}\xspace}
\newcommand{\hide}[1]{}
\newcommand{\nocolor}[1]{}
\newcommand{\replace}[2]{\hide{\color{red} \sout{#1}}{\xspace\nocolor{\color{blue}} #2}}
\newcommand{\N}{\mathbb{N}}
\newcommand{\tsty}{\displaystyle}
\title{An economic approach to vehicle dispatching for ride sharing}
\author{Mengjing Chen}
\author{Weiran Shen}
\author{Pingzhong Tang}
\author{Song Zuo}
\affil{IIIS, Tsinghua University
\thanks{Contacts: \texttt{ccchmj@qq.com,
\{emersonswr, kenshinping, songzuo.z\}@gmail.com}}}
\begin{document}

\maketitle


\begin{abstract}

Over the past few years, ride-sharing has emerged as an effective way to relieve
traffic congestion. A key problem for these platforms is to come up with a
revenue-optimal (or GMV-optimal) pricing scheme and an induced vehicle
dispatching policy that incorporate geographic and temporal information. In this
paper, we aim to tackle this problem via an economic approach.

Modeled naively, the underlying optimization problem may be non-convex and thus
hard to compute. To this end, we use a so-called ``ironing'' technique to
convert the problem into an equivalent convex optimization one via a clean
Markov decision process (MDP) formulation, where the states are the driver
distributions and the decision variables are the prices for each pair of
locations. Our main finding is an efficient algorithm that computes the exact
revenue-optimal (or GMV-optimal) randomized pricing schemes. We characterize the
optimal solution of the MDP by a primal-dual analysis of a corresponding convex
program. We also conduct empirical evaluations of our solution through real data
of a major ride-sharing platform and show its advantages over fixed pricing
schemes as well as several prevalent surge-based pricing schemes.


\end{abstract}

%
%


\section{Introduction}\label{sec:intro}

  The recently established applications of shared mobility, such as ride-sharing,
  bike-sharing, and car-sharing, have been proven to be an effective
  way to utilize redundant transportation resources and to optimize social
  efficiency~\citep{cramer2016disruptive}. Over the past few years, intensive
  researches have been done on topics related to the economic aspects of shared
  mobility~\citep{crawford2011new,kostiuk1990compensating,oettinger1999empirical}.

  Despite these researches, the problem of how to design revenue optimal prices
  and vehicle dispatching schemes has been
  largely open and one of the main research agendas in sharing economics. There
  are at least two challenges when one wants to tackle this problem in the
  real-world applications. First of all, due to the nature of transportation,
  the price and dispatch scheme must be geographically dependent. Secondly, the
  price and dispatch scheme must take into consideration the fact that supplies
  and demands in these environments may change over time. As a result, it may be
  difficult to compute, or even to represent a price and dispatch scheme for
  such complex environments.

  Traditional price and dispatch schemes for taxis
  \citep{laporte1992vehicle,gendreau1994tabu,ghiani2003real} and airplanes
  \citep{gale1993advance,stavins2001price,mcafee2006dynamic} do not capture the
  dynamic aspects of the environments: taxi fees are normally calculated by a
  fixed rate of distance and time and the
  prices of flight tickets are sold via relatively long booking periods, while
  in contrast, the customers of shared vehicles make their decisions instantly.

  The dynamic ride-sharing market studied in this paper is also known to have
  imbalanced supply and demand, either globally in a city or locally in a
  particular time and location. Such imbalance in supply and demand is known to
  cause severe consequences on revenues (e.g, the so-called {\em wild goose
  chase phenomenon}~\citep{castillo2017surge}). Surging price is a way to balance
  dynamic supply and demand \citep{chen2015dynamic} but there is no known
  guarantee that surge based pricing can dispatch vehicles efficiently and solve
  the imbalanced supplies and demands. Traditional dispatch schemes
  \citep{laporte1992vehicle,gendreau1994tabu,ghiani2003real} focus more on the
  algorithmic aspect of static vehicle routing, without consider pricing.
  However, vehicle dispatching and pricing problem are tightly related, since a
  new price scheme will surely induces a change on supply and demand since the
  drivers and passengers are strategic. In this paper, we aim to come up with
  price schemes with desirable induced supplies and demands.

  \subsection{Our contribution}

  In this paper, we propose a graph model to analyze the vehicle pricing and
  dispatching problem mentioned above. In the graph, each node refers to a
  region in the city and each edge refers to a possible trip that includes a
  pair of origin and destination as well as a cost associated with the trip on
  this edge. The design problem is, for the platform, to set a price and specify
  the vehicle dispatch for each edge at each time step. Drivers are considered
  to be non-strategic in our model, meaning that they will accept whatever offer
  assigned to them. The objective of the platform can either be its revenue or
  the GMV or any convex combination between them.

  Our model naturally induces a {\em Markov Decision Process} (MDP) with the
  driver distributions on each node as states, the price and dispatch along each
  edge as actions, and the revenue as immediately reward. Although the
  corresponding mathematical program is not convex (thus computationally hard to
  compute) in general, we show that it can be reduced to a convex one without
  loss of generality. In particular, in the resulting convex program where the
  throughput along each source and destination pair in each time period are the
  variables, all the constraints are linear and hence the exact optimal
  solutions can be efficiently computed (\autoref{thm:mdp}).

  We further characterize the optimal solution via primal-dual analysis. In
  particular, a pricing scheme is optimal if and only if the marginal
  contribution of the throughput along each edge equals to the system-wise
  marginal contribution of additional supply minus the difference of the long
  term contributions of unit supply at the origin and the destination (see
  \autoref{sec:dual}).

  We also perform extensive empirical analysis based on a public dataset
  with more than $8.5$ million orders.
  We compare our policy with other intensively studied policies such as surge
  pricing \citep{chen2015dynamic,cachon2016role,castillo2017surge}. Our
  simulations show that, in both the static and the dynamic environment, our
  optimal pricing and dispatching scheme outperforms surge pricing by $17\%$ and
  $33\%$. Interestingly, our simulations show that our optimal policy has much
  stronger ability in dispatching the vehicles than other policies, which
  results directly in its performance boost (see \autoref{sec:experiment}).


  \subsection{Related work}
Driven by real-life applications, a large number of researches have been done
on ride-share markets. Some of them employ queuing networks to model
the markets \citep{iglesias2016bcmp,banerjee2015pricing,tang2016coordinating}.
\citet{iglesias2016bcmp} describe the market as a
closed, multi-class BCMP queuing network which captures the randomness of
customer arrivals. They assume that the number of customers is fixed, since
customers only change their locations but don't leave the network. In contrast,
the number of customer are dynamic in our model and we only consider the one
who asks for a ride (or sends a request to the platform).
\citet{banerjee2015pricing} also use a queuing
theoretic approach to analyze the ride-share markets and mainly focus on the
behaviors of drivers and customers. They assume that the drivers enter or leave the
market with certain possibilities.
\citet{bimpikis2016spatial} take account for the
spatial dimension of pricing schemes in ride-share markets. They price for
each region and their goal is to rebalance the supply and demand of the whole
market. However, we price for each routing and aim to maximize the total revenue or
social welfare of the platform. We also refer the readers to the line of
researches initiated by \citep{ma2013t} for the problems about the car-pooling in
the ride-sharing systems \citep{alonso2017demand,zhao2014incentives,chan2012ridesharing}.

Many works on ride-sharing consider both the customers and the drivers to be
strategic, where the drivers may reject the requests or leave the system if the
prices are too low \citep{banerjee2015pricing,fang2017prices}. As we mentioned,
if the revenue sharing ratios between the platform and the drivers can be
dynamic, then the pricing problem and the revenue sharing problem could be
independent and hence the drivers are non-strategic in the pricing problem.
In addition, the platform can also increase the profit by adopting dynamic
revenue sharing schemes \citep{balseiro2017dynamic}.


Another work closely related to ours is by
\citet{banerjee2017pricing}. Their work is
concurrent and has been developed independently from ours. In particular, the
customers arrive according to a queuing model and their pricing policy is
state-independent and depends on the transition volume. Both their and our
models are built upon the underlying Markovian transitions between the states
(the distribution of drivers over the graph). The major differences are: (i) our
model is built for the dynamic environments with a very large number of
customers (each of them is non-atomic) to meet the practical situations, while
theirs adopts discrete agent settings; (ii) they overcome the non-convexity of
the problem by relaxation and focus only on concave objectives, which makes this work hard to use for real applications, while we solve
the problem via randomized pricing and transform the problem to a convex
program; (iii) they prove approximation bounds of the relaxation problem, while
we give exact optimal solutions of the problem by efficiently solving the convex
program.

\section{Model}\label{sec:model}

A passenger (she) enters the ride-sharing platform and sends a request including
her origin and destination to the platform. The platform receives the
request and determines a price for it. If user accepts the price, \replace{a
driver will be dispatched to pick her up, otherwise not}{then the platform may
decide whether to send a driver (he) to pick her up}. The platform is also able
to dispatch drivers from one place to another even there is no request to be
served. By the pricing and dispatching methods above, the goal of maximizing
revenue or social welfare of the entire platform can be achieved. Our model
incorporates the two methods into a simple pricing problem. In this section, we
define basic components of our model and consider two settings:
dynamic environments with a finite time horizon and static environments with an
infinite time horizon. Finally we reduce the action space of the problem and give a simple formulation.

  \paragraph{\bf Requests}
	We use a strongly connected digraph $G = (V, E)$ to model the geographical
  information of a city. Passengers can only take rides from nodes to nodes on
  the graph. When a passenger enters the platform, she expects to get a ride from node $s$ to node $t$, and is willing to pay at most $x\geq 0$ for the ride. She then sends to the platform a request, which is \replace{a tuple $(e, x)$}{associated with the tuple
  $e = (s, t)$}.
Upon receiving the request, The platform sets a price $p$ for it. If the price is accepted by the passenger (i.e., $x\geq p$), then the platform tries to send a driver to pick her up. We say that the platform rejects the request, if no driver is available.

A request is said to be \emph{accepted} if both the passenger accepts the price $p$ and there are available drivers. Otherwise, the request is considered to end immediately.

  \paragraph{\bf Drivers}
  \replace{A driver is available at node $v$ if he can provide a ride for a
  passenger whose ride starts form $v$. For any $v \in V$, let $w(v)$ be the
  number of available drivers at node.}{
  Clearly, within each time period,
  the total number of accepted requests starting from $s$ cannot be more than
  the number of drivers available at $s$. Formally, let $q(e)$ denote the total
  number of accepted request along edge $e$, then:
  \begin{align}\label{eq:feasible}
    \tsty
    \sum_{e \in \vout(v)} q(e) \leq w(v),{~}\forall v \in V,
  \end{align}
  where $\vout(v)$ is the set of edges starting from $v$ and $w(v)$ is the
  number of currently available drivers at node $v$.}

  In particular, we assume that both the total number of drivers and the number
  of requests are very large, which is often the case in practice, and consider
  each driver and each request to be {\em non-atomic}. For simplicity, we
  normalize the total amount of drivers on the graph to be $1$, thus $w(v)$ is a
  real number in $[0, 1]$. We also normalize the number of requests on each edge
  with the total number of drivers. Note that the amount of requests on an edge
  $e$ can be more than $1$, if there are more requests on $e$ than the total
  drivers on the graph.

  \replace{}{
  \paragraph{\bf Geographic Status}
  For each accepted request on edge $e$, the platform will have to cover a
  transportation cost $c_\tau(e)$ for the driver. In the meanwhile, the assigned
  driver, who currently at node $s$, will not be available until he arrives the
  destination $t$. Let $\Delta \tau(e)$ be the traveling time from $s$ to $t$
  and $\tau$ be the timestep of the driver leaving $s$. He will be available
  again at timestep $\tau + \Delta \tau(e)$ on node $t$. Formally, the amount of
  available drivers on any $v \in V$ is evolving according to the following
  equations:
  \begin{align}\label{eq:trans}
    w_{\tau + 1}(v)
      = w_\tau(v) - \sum_{e \in \vout(v)} q_\tau(e)
        + \sum_{e \in \vin(v)} q_{\tau + 1 - \Delta \tau(e)}(e),
  \end{align}
  where $\vin(v)$ is the set of edges ending at $v$. Here we add subscripts to
  emphasize the timestamp for each quantity. In particular, throughout this
  paper, we focus on the discrete time step setting, i.e., $\tau \in \N$.
  }

  \replace{}{
  \paragraph{\bf Demand Function}
  As we mentioned, the platform could set different prices for the requests.
  Such prices may vary with the request edge $e$, time step $t$, and the driver
  distribution but must be independent of the passenger's private value $x$ as
  it is not observable.
  Formally, let $D_\tau(\cdot | e): \R_+ \rightarrow \R_+$ be the {\em
  demand function} of edge $e$, i.e., $D_\tau(p | e)$ is the amount of requests
  on edge $e$ with private value $x \geq p$ in time step $\tau$.\footnote{In
  practice, such a demand function can be predicted from historical data
  \cite{tong2017simpler,moreira2013predicting}.} Then the amount
  of accepted requests $q_\tau(e) \leq \E[D_\tau(p_\tau(e) | e)]$, where the
  expectation is taken over the potential randomness of the pricing rule
  $p_\tau(e)$.\footnote{The randomized pricing rule may set different prices for
  the requests on the same edge $e$.}

  \paragraph{\bf Design Objectives}
  In this paper, we consider a class of \emph{state-irrelevant} objective
  functions. A function is state-irrelevant if its value only depends on the
  amount of accepted request on each edge $q(e)$ but not the driver distribution
  of the system $w(v)$. Note that a wide range of objectives are included in our
  class of objectives, such as the revenue of the platform:
	\begin{gather*}
	  \mathrm{REVENUE}(p, q)
      = \sum_{e, \tau}\E[(p_\tau(e) - c_\tau(e))\cdot q_\tau(e)],
	\end{gather*}
	and the social welfare of the entire system:
	\begin{gather*}
	  \mathrm{WELFARE}(p, q)
      = \sum_{e, \tau}\E[(x - c_\tau(e)) \cdot q_\tau(e)].
	\end{gather*}

	In general, our techniques work for any state-irrelevant objectives. Let
  $g(p, q)$ denote the general objective function and the dispatching and
  pricing problem can be formulated as follows:
  \begin{align}\label{eq:program}
    \text{maximize}   \quad & \tsty \sum_{e, \tau} g(p_\tau(e), q_\tau(e) | e)
    \\
    \text{subject to} \quad & \eqref{eq:feasible} ~\text{and}~ \eqref{eq:trans}.
    \nonumber
  \end{align}

  \paragraph{\bf Static and Dynamic Environment}
  In general, our model is defined for a dynamic environment in the sense that
  the demand function $D^\tau$ and the transportation cost $c_\tau$ could be
  different for each time step $\tau$. In particular, we study the problem
  \eqref{eq:program} in general dynamic environments with finite time horizon
  from $\tau = 1$ to $T$, where the initial driver distribution $w_1(v)$ is
  given as input.

  In addition, we also study the special case with {\em static environment} and
  infinite time horizon, where $D^\tau \equiv D$ and $c_\tau \equiv c$ are
  consistent across each time step.
  }

  \subsection{Reducing the action space}

  In this section, we rewrite the problem to an equivalent reduced
  form by incorporating the action of dispatching into pricing, i.e., using $p$
  to express $q$. The idea is straightforward: (i) for the requests rejected by
  the platform, the platform could equivalently set an infinitely large price;
  (ii) if the platform is dispatching available drivers (without requests) from
  node $s$ to $t$, we can create virtual requests from $s$ to $t$ with $0$ value
  and let the platform sets price $0$ for these virtual requests. In fact, we
  can assume without loss of generality that $D(0 | e) \equiv 1$, the total
  amount of drivers, because one can always add enough virtual requests for the
  edges with maximum demand less than $1$ or remove the requests with low values
  for the edges with maximum demand exceeds the total driver supply, $1$.

  As a result, we may conclude that $q(e) \leq D(p | e)$. Since our goal is to
  maximize the objective $g(p, q)$, raising prices to achieve the same amount of
  flow $q(e)$ (such that $\E[D(p | e)] = q(e)$) never eliminates the optimal
  solution. In other words,
  \begin{observation}\label{obs:reduce}
    The original problem is equivalent to the following reduced problem, where
    the flow variables $q_\tau(e)$ are uniquely determined by the price
    variables $p_\tau(e)$:
    \begin{equation}\label{eq:formal}
    \begin{aligned}
      \text{maximize}   \quad &
          \tsty \sum_{e, \tau} g(p_\tau(e), D_\tau(p_\tau(e) | e))  \\
      \text{subject to} \quad &
          q_\tau(e) = \E[D(p_\tau(e) | e)]  \\
          & \eqref{eq:feasible} ~\text{and}~ \eqref{eq:trans}.
    \end{aligned}
    \end{equation}
  \end{observation}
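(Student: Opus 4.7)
The plan is to verify the two directions separately: every feasible solution of \eqref{eq:program} can be matched by a solution of \eqref{eq:formal} with no smaller objective value, and vice versa. The backward direction is immediate, since any $(p, q)$ satisfying $q_\tau(e) = \E[D_\tau(p_\tau(e) | e)]$ together with \eqref{eq:feasible} and \eqref{eq:trans} is already feasible for \eqref{eq:program}, and achieves the same value of $g$.

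For the forward direction, I would start from an arbitrary feasible $(p, q)$ of \eqref{eq:program}, for which in general only the inequality $q_\tau(e) \leq \E[D_\tau(p_\tau(e) | e)]$ holds. The key step is to construct a new pricing rule $p'_\tau(e) \geq p_\tau(e)$ satisfying $\E[D_\tau(p'_\tau(e) | e)] = q_\tau(e)$ exactly. Since $D_\tau(\cdot | e)$ is non-increasing in price, such a $p'_\tau(e)$ exists whenever randomization is permitted: a convex mixture of two deterministic prices always realizes any intermediate demand value. Crucially, the feasibility constraint \eqref{eq:feasible} and the transition law \eqref{eq:trans} depend only on $q$, so they are preserved verbatim under the substitution $p \mapsto p'$.

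It then remains to show that $g(p', q) \geq g(p, q)$, i.e., raising the price while leaving $q$ unchanged never hurts the state-irrelevant objective. For the revenue case $g = (p - c) q$ this is immediate since $p' \geq p$; for welfare, under $(p', q)$ the served mass consists precisely of the highest-value willing requests (those with $x \geq p'_\tau(e)$), which dominates any other choice of a $q$-mass subset of the willing requests under $(p, q)$. The step I expect to require most care is the preprocessing promised in the paragraph preceding the observation: reducing to the normalization $D_\tau(0|e) \equiv 1$ by adding virtual zero-value requests to under-demanded edges and truncating low-value requests on over-demanded ones. I would verify that neither operation changes the feasible set of $q$ nor the optimal value of $g$, so that the uniform existence of $p'_\tau(e)$ above applies to every edge and every time step.
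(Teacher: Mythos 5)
Your proposal is correct and follows essentially the same route as the paper's (informal) justification: raise prices, using randomization where needed, so that $\E[D_\tau(p'_\tau(e)\,|\,e)] = q_\tau(e)$ exactly, note that \eqref{eq:feasible} and \eqref{eq:trans} depend only on $q$, and use monotonicity of the objective in the price together with the virtual-request normalization $D(0\,|\,e)\equiv 1$. Your write-up is simply a more explicit two-direction version of the paper's one-paragraph argument, including the welfare case the paper leaves implicit.
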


\section{Problem Analysis}\label{sec:form}


\replace{
In this section, we first show our model is flexible enough to extend to
practical situations. We present techniques to unify models and show how to
dispatch vehicles efficiently. Secondly, we reformulate the pricing problem to a
flow problem. Once we decide the number of served requests for each route, the
prices of requests are determined. Then we concentrate on \emph{state-irrelevant}
objective functions and tackle the non-convexity problem by randomized pricing
scheme and ''ironing'' technique. Finally, we formally define the dispatching
problem as a Markov decision process.
}{
In this section, we demonstrate how the original problem \eqref{eq:formal} can
be equivalently rewritten as a Markov decision process with a convex objective
function. Formally,

\begin{theorem}\label{thm:mdp}
  The original problem \eqref{eq:formal} of the instance $\langle G, D, g,
  \Delta \tau \rangle$ is equivalent to a Markov decision process problem of
  another instance $\langle G', D', g', \Delta \tau' \rangle$ with $g'$ being
  convex.
\end{theorem}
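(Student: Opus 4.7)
The plan is to construct the equivalent instance $\langle G', D', g', \Delta\tau' \rangle$ via two independent transformations of the original one. First, I would subdivide each edge so that every hop takes exactly one time step, which turns the delayed-arrival dynamics in \eqref{eq:trans} into a genuine one-step Markov transition. Second, I would apply the ironing technique edge by edge to replace the per-edge reward by its concave envelope in the flow variable, exploiting the freedom of randomized pricing.

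For the first transformation, for each edge $e = (s, t) \in E$ with travel time $\Delta \tau(e) = k$, I would introduce $k - 1$ fresh ``in-transit'' nodes $u_1^e, \ldots, u_{k-1}^e$ and replace $e$ by the path $s \to u_1^e \to \cdots \to u_{k-1}^e \to t$. The first hop inherits the demand $D_\tau(\cdot \mid e)$ and transportation cost $c_\tau(e)$ of the original edge; the auxiliary hops carry zero demand (no passenger can board them) and are traversed deterministically at zero cost. With $\Delta\tau'(e') \equiv 1$ on $E'$, the state-update equation \eqref{eq:trans} collapses to $w_{\tau+1}(v) = w_\tau(v) - \sum_{e' \in \vout(v)} q_\tau(e') + \sum_{e' \in \vin(v)} q_\tau(e')$, so the next state depends only on the current state and current action.

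For the second transformation, \observationref{obs:reduce} lets me express the per-edge reward as a function of the price (equivalently, of the flow $q$) alone. Writing $h_e(q) \defeq g(p(q), q \mid e)$ with $p(q) = D^{-1}(q \mid e)$, this function may fail to be concave in $q$; let $\bar h_e$ be its upper concave envelope on the feasible range. Any point on $\bar h_e$ is realized by a two-atom randomized price: a distribution supported on $\{p_1, p_2\}$ with weights $\alpha, 1 - \alpha$ produces expected flow $\alpha q_1 + (1 - \alpha) q_2$ and expected reward $\alpha h_e(q_1) + (1 - \alpha) h_e(q_2)$; conversely, no randomized pricing can exceed $\bar h_e(q)$ at expected flow $q$ by the very definition of the envelope. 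Setting $g'(q \mid e) \defeq \bar h_e(q)$ therefore yields an MDP whose per-step reward is concave in the action, so the induced maximization is a convex program as required.

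Assembling the two pieces produces the desired instance $\langle G', D', g', \Delta\tau' \rangle$. The main obstacle, in my view, is not either transformation individually but verifying the global equivalence of the two programs: one must argue that ironing edge by edge and time step by time step preserves both feasibility and optimal value across the full dynamic problem. This reduces to the observation that the capacity constraint \eqref{eq:feasible} and the one-step dynamics are linear in the flows $q_\tau(e')$, so a mixed pricing strategy with expected flow $q_\tau(e')$ obeys exactly the same constraints as a deterministic strategy with the same flow; hence the suprema of the two formulations coincide pointwise, and any optimizer of the ironed MDP is implementable in the original problem via the two-atom randomization described above.
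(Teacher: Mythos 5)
Your proposal is correct and follows essentially the same two-step route as the paper's own proof: first unify all travel times to one step by subdividing each edge with virtual in-transit nodes (\lemmaref{lem:traveltime}), then take the flows as the MDP actions and replace each per-edge reward by its concave envelope, attained through a two-atom randomized price and bounded above via Jensen's inequality (\lemmaref{lem:flow}). The only difference is bookkeeping on the subdivided path --- you place the full demand and reward on the first hop with zero-demand auxiliary hops, whereas the paper replicates the demand function on every sub-edge, enforces equal prices along $E'(e)$, and splits the reward by $1/\Delta\tau(e)$ --- which does not change the substance of the reduction.
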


The proof of Theorem \ref{thm:mdp} will be immediate after Lemma \ref{lem:traveltime} and \ref{lem:flow}. The equivalent Markov decision process problem could be
formulated as a convex program, and hence can be solved efficiently.

\subsection{Unifying travel time}
Note that the original problem \eqref{eq:formal}, in general, is not a MDP by
itself, because the current state $w^{\tau + 1}(v)$ may depend on the action
$q^{\tau + 1 - \Delta \tau(e)}$ in \eqref{eq:trans}. Hence our first step is to
equivalently map the original instance to another instance with traveling time
is always $1$, i.e., $\Delta \tau(e) \equiv 1$:

\begin{Lemma}[Unifying travel time]\label{lem:traveltime}
  The original problem \eqref{eq:formal} of an general instance $\langle G, D,
  g, \Delta \tau\rangle$ is equivalent to the problem of a $1$-travel time
  instance $\langle G', D', g', \Delta \tau' \rangle$, where $\Delta
  \tau'(\cdot) \equiv 1$.
\end{Lemma}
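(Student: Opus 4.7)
The plan is to build the $1$-travel-time instance $\langle G', D', g', \Delta\tau'\rangle$ by \emph{subdividing} every edge of $G$ into a chain of unit-time edges, and then to match feasible solutions of the two instances step by step so that the objective values coincide.

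\textbf{Construction.} For each original edge $e=(s,t)\in E$ with travel time $\Delta\tau(e)=k$, I insert $k-1$ fresh \emph{transit nodes} $u^e_1,\dots,u^e_{k-1}$ and replace $e$ by the directed path consisting of $e_1=(s,u^e_1),\, e_2=(u^e_1,u^e_2),\,\dots,\,e_k=(u^e_{k-1},t)$, each with $\Delta\tau'(e_i)=1$. The entire original demand lives on the head edge, $D'_\tau(\cdot | e_1):=D_\tau(\cdot | e)$, while each transit edge is made ``always-accept at price zero'' in the sense of the reduction from \autoref{obs:reduce}: $D'_\tau(0 | e_i)=1$ and $D'_\tau(p | e_i)=0$ for $p>0$ and $i\ge 2$. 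Analogously I set $g'(\cdot | e_1):=g(\cdot | e)$ and $g'(\cdot | e_i):=0$ for $i\ge 2$, and initialize $w'_1(v)=w_1(v)$ at original nodes and $w'_1(u^e_i)=0$ at transit nodes.

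\textbf{Forward direction.} Given a feasible $(p,q)$ for the original instance, I map it to $(p',q')$ by $q'_\tau(e_1):=q_\tau(e)$, $p'_\tau(e_1):=p_\tau(e)$, and for each $i\ge 2$, $q'_{\tau+i-1}(e_i):=q_\tau(e)$, $p'_{\tau+i-1}(e_i):=0$. An induction on $\tau$ using \eqref{eq:trans} shows that $w'_\tau(v)=w_\tau(v)$ at every original node $v$ (the net in/out balance telescopes because $q'_\tau(e_k)=q_{\tau+1-\Delta\tau(e)}(e)$ by the chain offset), and that each transit-node driver count exactly covers the next forward dispatch, so \eqref{eq:feasible} is preserved. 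Since $g'$ is nonzero only on head edges and coincides there with $g$, the two objectives match.

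\textbf{Backward direction and main obstacle.} Given a feasible $(p',q')$ for the new instance, I first argue, without loss of generality, that every driver present at a transit node at time $\tau$ is dispatched forward in the same step. This push-forward modification is the delicate part of the proof: I need to show it never hurts feasibility or the objective. Because $g'$ vanishes on transit edges and the driver distribution enters the objective only through the inequality \eqref{eq:feasible}, advancing a driver along a transit chain leaves every term of the objective untouched and only increases $w'_{\tau'}(v)$ at each later time $\tau'$ at the corresponding downstream original node $v$, so every downstream constraint $\sum_{e\in\vout(v)} q'_{\tau'}(e)\le w'_{\tau'}(v)$ is weakly relaxed. After this normalization the chain satisfies $q'_{\tau+i-1}(e_i)=q'_\tau(e_1)$ for all $i=1,\dots,k$, and I can unambiguously define $q_\tau(e):=q'_\tau(e_1)$, $p_\tau(e):=p'_\tau(e_1)$; feasibility and objective equality for $(p,q)$ then follow by the same inductive bookkeeping as in the forward direction. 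The main subtlety is precisely in justifying the push-forward as a global weak improvement rather than just a local one, which reduces to the monotonicity observation above plus the state-irrelevance of $g$.
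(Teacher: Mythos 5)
Your construction is correct, but it takes a genuinely different route from the paper's. The paper also subdivides each edge $e$ into a chain of $\Delta\tau(e)$ unit-time edges, but it copies the \emph{same} demand function $D(\cdot\,|\,e)$ onto every sub-edge, imposes the side condition that all sub-edges of a chain carry the same price (so a passenger is never abandoned halfway), and splits the reward equally, $g'(p,q\,|\,e') = g(p,q\,|\,e)/\Delta\tau(e)$; the equivalence is then asserted via a direct bijection between dispatching behaviors of $G$ and $G'$. You instead concentrate all demand and all reward on the head edge, make the transit edges free (price $0$, zero reward), and resolve the ``driver could park at a transit node'' asymmetry with a push-forward exchange argument: since transit edges contribute nothing to the objective and advancing a cohort only makes drivers arrive at the true destination weakly earlier, cumulative arrivals dominate, departures are unchanged, and every downstream constraint $\sum_{e\in\vout(v)}q'(e)\le w'(v)$ is weakly relaxed — that reasoning is sound and is exactly the right way to close the backward direction. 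The trade-off: the paper's version gives a syntactically transparent correspondence (original price replicated on each sub-edge) but needs the equal-price coupling, which is an extra cross-edge constraint not native to the instance format $\langle G',D',g',\Delta\tau'\rangle$; your version stays strictly inside the format at the cost of the WLOG normalization lemma. A further small advantage of your head-edge accounting is that the full reward of a trip accrues at its start time, exactly as in \eqref{eq:formal}, whereas the paper's even split spreads it over $\Delta\tau(e)$ steps, which only agrees with the original objective up to boundary effects at a finite horizon. Do make the two inductions (state equality at original nodes, and the cohort-wise rescheduling) explicit if you write this up, but the ideas as stated are complete.
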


\replace{Note that not all drivers currently in the system can provide ride services. A
working driver cannot handle another request before finishing the current one.
And different requests may require different time to finish. In order to
distinguish between working drivers and idle (can provide service immediately)
drivers, a possible solution is to add an indicator variable to each driver.
However,}{
Intuitively,} we tackle this problem by adding virtual nodes into the graph to
replace the original edges. This operation splits the entire trip into smaller
ones, and at each time step, all drivers become \replace{idle}{available}.

}


\begin{proof}
  For edges with traveling time $\Delta \tau(e) = 1$, we are done.

  For edges with traveling time $\Delta \tau(e) > 1$,
%
we add $\Delta \tau(e) - 1$ virtual nodes into the graph, i.e., $v^e_1, \ldots,
v^e_{\Delta \tau(e) - 1}$, and the directed edges connecting them to replace the
original edge $e$, i.e.,
\begin{gather*}
  \tsty
  E'(e) =
    \{(s, v^e_1), (v^e_1, v^e_2), \ldots, (v^e_{\Delta \tau(e) - 2},
      v^e_{\Delta \tau(e) - 1}), (v^e_{\Delta \tau(e) - 1}, t)\},  \\
  E' = \bigcup_{e \in E} E'(e), \quad V' = \bigcup_{e \in E} \{v^e_1, \ldots,
        v^e_{\Delta \tau(e) - 1}\} \cup V.
\end{gather*}
We set the demand function of each new edge $e' \in E'(e)$ to be identical to
those of the original edge $e$: $D'(\cdot | e') \equiv D(\cdot | e)$.

An important but natural constraint is that if a driver handles a request on
edge $e$ of the original graph, then he must go along all edges in $E'(e)$ of
the new graph, because he cannot leave the passenger halfway. To guarantee this,
we only need to guarantee that all edges in $E'(e)$ have the same price. Also,
we need to split the objective of traveling along $e$ into the new edges, i.e.,
each new edge has objective function
\begin{gather*}
  g'(p, q | e') = g(p, q | e) / \Delta \tau(e), \forall e'\in E'(e).
\end{gather*}
One can easily verify that the above operations increase the graph size to at
most $\max_{e \in E} \Delta \tau(e)^*$ times of that of the original one.
In particular, there is a straightforward bijection between the dispatching
behaviors of the original $G = (V, E)$ and the new graph $G' = (V', E')$. Hence
we can always recover the solution to the original problem.
%
\end{proof}




\subsection{Flow formulation and randomized pricing}

By \autoref{lem:traveltime}, the original problem \eqref{eq:formal} can be
formulated as an MDP:
\begin{definition}[Markov Decision Process]
  The vehicle pricing and dispatching problem is a Markov decision process,
  denoted by a tuple $(G, D, g, S, A, W)$, where $G = (V, E)$ is the given
  graph, $D$ is the demand function, objective $g$ is the reward function, $S =
  \Delta(V)$ is the state space including all possible driver distributions over
  the nodes, $A$ is the action space, and $W$ is the state transition rule:
  \begin{align}\label{eq:strans}
    \tsty
    w_{\tau + 1}(v) = w_\tau(v) - \sum_{e \in \vout(v)} q_\tau(e)
      + \sum_{e \in \vin(v)} q_\tau(e).
  \end{align}
\end{definition}

However, by na\"ively using the pricing functions $p_\tau(e)$ as the actions,
the induced flow $q_\tau(e) = \E[D_\tau(p_\tau(e) | e)]$, in general, is neither
convex nor concave. In other words, both the reward $g$ and the state transition
$W$ of the corresponding MDP is non-convex. As a result, it is hard to solve the
MDP efficiently.

In this section, we show that by formulating the MDP with the {\em flows}
$q_\tau(e)$ as actions, the corresponding MDP is convex.
\begin{Lemma}[Flow-based MDP]\label{lem:flow}
  In the MDP $(G, D, g, S, A, W)$ with all possible flows as the action set $A$,
  i.e., $A = [0, 1]^{|E|}$, the state transition rules are linear functions of
  the flows and the reward functions $g$ are convex functions of the flows.
\end{Lemma}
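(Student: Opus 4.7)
The plan is to split the proof along the two assertions of the lemma and handle each with a distinct tool. The linearity half is almost immediate: after the reduction of \autoref{lem:traveltime} to graphs with unit travel time, the transition rule~\eqref{eq:strans} reads $w_{\tau+1}(v) = w_\tau(v) - \sum_{e \in \vout(v)} q_\tau(e) + \sum_{e \in \vin(v)} q_\tau(e)$, which is manifestly affine in the action vector $q_\tau$ when $w_\tau$ is treated as the state. The feasibility constraints~\eqref{eq:feasible} are likewise linear in $q$, so no real work is needed for this piece beyond pointing at the equations.

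For the reward claim, the key device is randomized pricing combined with the ironing operation. For a fixed edge $e$ and time step $\tau$, consider the parametric curve $\{(D_\tau(p|e),\, g(p, D_\tau(p|e)|e)) : p \in \R_+\}$ in the (flow, per-step reward) plane; under deterministic pricing this curve is in general non-convex. Allowing the platform to draw the price from an arbitrary distribution $F$ and taking expectations of both coordinates lets the attainable (flow, reward) pairs fill in the entire convex hull of this raw curve. Consequently, the maximum expected reward achievable subject to hitting a target expected flow $q$ is the upper boundary of this convex hull, i.e., the upper concave envelope of the raw curve. I would then \emph{define} the flow-indexed reward entering the MDP, call it $\bar g(q \mid e, \tau)$, to be this envelope value. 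By construction $\bar g$ has the curvature the lemma requires, so the Bellman stage reward is of the right shape; summing $\sum_{e,\tau} \bar g(q_\tau(e) \mid e, \tau)$ preserves this across coordinates, and combined with the affine transition and linear feasibility constraints the resulting MDP is a convex optimization problem of exactly the form \autoref{thm:mdp} needs.

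The main obstacle I expect is showing that the envelope is actually \emph{attainable} by some randomized pricing rule and that this attainment decomposes cleanly across edges and time steps. For attainability I would invoke a Carath\'eodory-style argument: a point on or below the upper envelope of a one-dimensional curve is realized by a convex combination of at most two points on the raw curve, so a price distribution supported on at most two prices suffices for each $(e,\tau)$. For decomposition, the separability of the objective over $(e,\tau)$ together with the independence of the pricing distributions across coordinates ensures that per-coordinate ironing introduces no hidden cross-edge coupling, so the edge-wise envelopes simply add. A minor technical point worth checking is that the envelope is well-defined and finite on the entire feasible flow range $[0,1]$; here the normalization argument in \autoref{obs:reduce} (adding zero-value virtual requests so that $D(0|e) \equiv 1$) guarantees the raw curve extends across all of $[0,1]$, and the envelope is finite because $g$ is bounded on the relevant price range. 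Once these points are in place the two halves of the lemma combine to give the desired convex MDP.
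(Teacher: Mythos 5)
Your proposal is correct and follows essentially the same route as the paper: the transition linearity is read off from \eqref{eq:strans}, and the per-edge reward is replaced by the upper concave envelope (the paper's ``ironed'' objective $\hat g_\tau(q|e)$), with attainability via a two-point price randomization at the endpoints of the ironed interval and the upper bound via Jensen's inequality, which is just the paper's phrasing of your convex-hull argument. The only cosmetic difference is that you invoke Carath\'eodory for the two-point support while the paper exhibits the two support points explicitly as the endpoints $q', q''$ of the ironed interval containing $\bar q$.
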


\begin{proof}
  To do this, we first need to rewrite the prices $p_\tau(e)$ as functions of
  the flows $q_\tau(e)$. In general, since the prices could be randomized, the
  inverse function of $q_\tau(e) = \E[D_\tau(p_\tau(e) | e)]$ is not unique.

  Note that conditional on fixed flows $q_\tau(e)$, the state transition of the
  MDP is also fixed. In this case, different prices yielding such specific flows
  only differs in the rewards. In other words, it is without loss of generality
  to let the inverse function of prices be as follows:
  \begin{align*}
    \tsty
    p_\tau(e) = \argmax_{p} g_\tau(p_\tau(e), q_\tau(e) | e), ~ \text{s.t.} ~
      q_\tau(e) = \E[D_\tau(p_\tau(e) | e)].
  \end{align*}

  In particular, since the objective function $g$ we studied in this paper is
  linear and weakly increasing in the prices $p$ and the demand function
  $D(p | e)$ is decreasing in $p$, the inversed price function could be defined
  as follows:
  \begin{itemize}
    \item Let $g_\tau(q | e) = g_\tau(D^{-1}_\tau(q | e), q | e)$, i.e., the
          objective obtained by setting the maximum fixed price $p =
          D^{-1}_\tau(q | e)$ such that the induced flow is exactly $q$;
    \item Let $\hat{g}_\tau(q|e)$ be the {\em ironed objective function}, i.e.,
          the smallest concave function that upper-bounds $g_\tau(q|e)$ (see
          \autoref{fig:iron});
    \item For any given $q_\tau(e)$, the maximum objective on edge $e$ is $\hat
          g_\tau(q_\tau(e) | e)$ and could be achieve by setting the price to be
          randomized over $D^{-1}_\tau(q' | e)$ and $D^{-1}_\tau(q'' | e)$.
  \end{itemize}
  \begin{figure}[h!]
  	\centering
  	\includegraphics[width=0.5\textwidth]{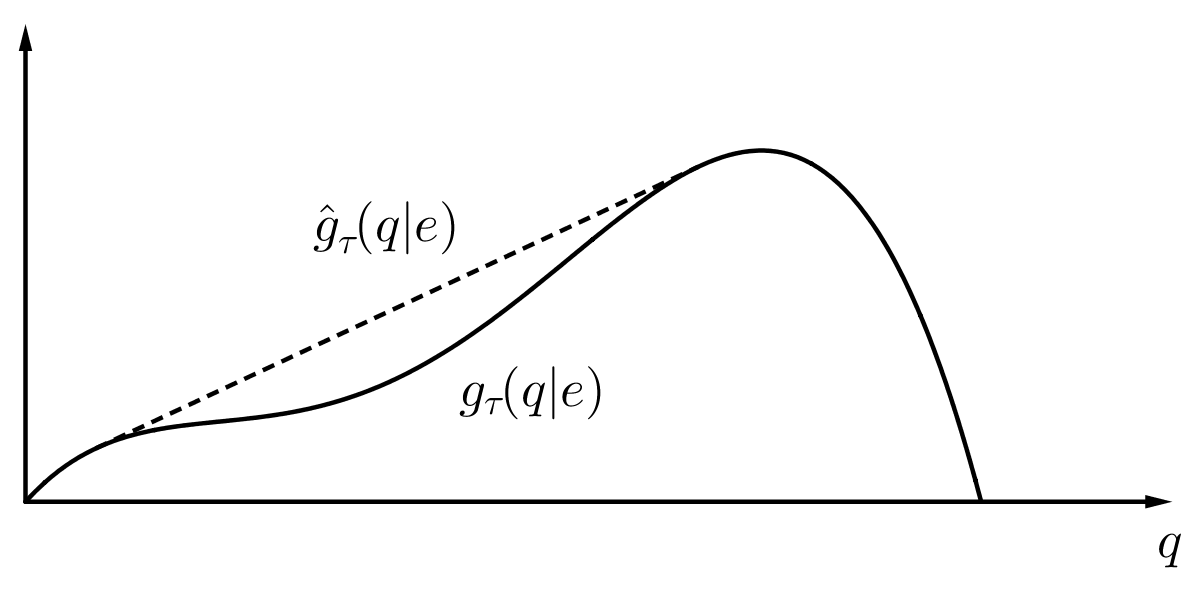}
  	\caption{Ironed objective function}
  	\label{fig:iron}
  \end{figure}
  Finally, we prove the above claim to complete the proof of \autoref{lem:flow}.

	By the definition of $\hat{g}_\tau(q|e)$, for any randomized price $p$,
	\begin{gather*}
    \tsty
		\E_{p}[g_\tau(D_\tau(p | e)|e)]
    \le \E_{p}[\hat{g}_\tau(D_\tau(p | e)|e)].
	\end{gather*}
	Since $\hat{g}$ is concave, applying Jensen's inequality yields:
	\begin{gather*}
    \tsty
		\E_{p}[\hat{g}_\tau(D_\tau(p | e)|e)]
      \le \hat{g}_\tau\left(\E_{p}[D_\tau(p | e)]\big|e\right)
      =\hat{g}_\tau(\bar{q}|e)
	\end{gather*}
	Now it suffices to show that the upper bound $\hat{g}_\tau(\bar{q}|e)$ is
  attainable.

	If $\hat{g}_\tau(\bar{q}|e)=g_\tau(\bar{q}|e)$, then the right-hand-side could
  be achieved by letting $p_\tau(e)$ be the deterministic price
  $D^{-1}_{\tau}(\bar q | e)$.

  Otherwise, let $I = (q', q'')$ be the ironed interval (where $\hat{g}_\tau(q|e)
  > g_\tau(q|e), \forall q \in I$ but $\hat{g}_\tau(q'|e)= g_\tau(q'|e)$ and
  $\hat{g}_\tau(q''|e)= g_\tau(q''|e)$) containing $\bar{q}$. Thus $\bar{q}$ can
  be written as a convex combination of the end points $q'$ and $q''$: $\bar q =
  \lambda q' + (1 - \lambda) q''$. Note that the function $\hat{g}_\tau$ is
  linear within the interval $I$. Therefore
	\begin{align*}
    \tsty
		\lambda g_\tau(q' | e) + (1 - \lambda)g_\tau(q'' | e)
      = \lambda \hat{g}_\tau(q' | e) + (1 - \lambda)\hat{g}_\tau(q'' | e)
			= \hat{g}_\tau(\lambda q' + (1 - \lambda)q'' | e)
         = \hat{g}_\tau(\bar{q} | e).
	\end{align*}
  In other words, the upper bound $\hat g_\tau(\bar{q} | e)$ could be achieved
  by setting the price to be $q'$ with probability $\lambda$ and $q''$ with
  probability $1 - \lambda$. In the meanwhile, the flow $q_\tau(e)$ would retain
  the same.
  %
\end{proof}

\begin{proof}[Proof of \autoref{thm:mdp}]
  The theorem is implied by \autoref{lem:traveltime} and \autoref{lem:flow}. In
  particular, the reward function is the ironed objective function $\hat g$.
\end{proof}

In the rest of the paper, we will focus on the following equivalent problem:
\begin{equation}\label{eq:convex}
\begin{aligned}
  \text{maximize}   \quad & \tsty \sum_{e, \tau} \hat g_\tau(q_\tau(e) | e)  \\
  \text{subject to} \quad & \eqref{eq:feasible} ~\text{and}~ \eqref{eq:strans}.
\end{aligned}
\end{equation}

\section{Optimal Solution in Static Environment}
In this setting, we restrict our attention to the case where \replace{}{the
environment is static, hence} the objective function does not change over time,
i.e., $\forall \tau\in [T], \hat{g}_\tau(q|e) \equiv \hat{g}(q|e)$. We aim to
find the optimal stationary policy that maximizes the objective function, i.e.,
the decisions $q_\tau$ depends only on the current state $w_\tau$.

In this section, we discretize the MDP problem and focus on \emph{stable
policies}. With the introduction of the ironed objective function $\hat{g}_\tau$,
we show that for any discretization scheme, the optimal stationary policy of the
induced \emph{discretized MDP} is dominated by a stable dispatching scheme. Then
we formulate the stable dispatching scheme as a convex problem, which means the
optimal stationary policy can be found in polynomial time.
\begin{definition}
	A stable dispatching scheme is a pair of state and policy $(w_\tau, \pi)$,
  such that if policy $\pi$ is applied, the distribution of available drivers
  does not change over time, i.e., $w_{\tau+1}(v) = w_\tau(v)$.
  %
\end{definition}

In particular, under a stable dispatching scheme, the state transition rule
\eqref{eq:strans} is equivalent to the following form:
\begin{align}\label{eq:ststrans}
  \tsty
  \sum_{e \in \vout(v)} q(e) = \sum_{e \in \vin(v)} q(e).
\end{align}

\begin{definition}
	Let $\mathcal{M}=(G, D, \hat g, S, A, W)$ be the original MDP problem. A
  discretized MDP $\mathcal{DM}$ with respect to $\mathcal{M}$ is a tuple $(G_d,
  D_d, \hat g_d, S_d, A_d, W_d)$, where $G_d=G$, $D_d=D$, $\hat g_d=\hat g$,
  $W_d=W$, $S_d$ is a finite subset of $S$, and $A_d$ is a finite subset of $A$
  that contains all feasible transition flows between every two states in $S_d$.
\end{definition}
\begin{theorem}
	\label{thm:stable}
	Let $\mathcal{DM}$ and $\mathcal{M}$ be a discretized MDP and the corresponding
	original MDP. Let $\pi_d: S_d\to A_d$ be an optimal stationary policy of
	$\mathcal{DM}$. Then there exists a stable dispatching scheme $(w, \pi)$, such
	that the time-average objective of $\pi$ in $\mathcal{M}$ is no less than that
	of $\pi_d$ in $\mathcal{DM}$.
\end{theorem}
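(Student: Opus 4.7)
The plan is to extract a stable dispatching scheme from the long-run trajectory of $\pi_d$ by time-averaging the flows, and then invoke concavity of the ironed objective to show the resulting scheme weakly dominates $\pi_d$. Since $S_d$ is finite and $\pi_d$ is stationary (we may assume deterministic without loss of generality, by standard average-reward MDP theory), the state sequence produced by $\pi_d$ in $\mathcal{DM}$ must eventually enter a cycle, say of length $L$, on states $w_1, \ldots, w_L \in S_d$ with induced flows $q_1, \ldots, q_L \in A_d$ satisfying $w_{L+1} = w_1$. The Cesaro mean discards the finite transient prefix, so the time-average objective of $\pi_d$ in $\mathcal{DM}$ equals $\frac{1}{L}\sum_{\tau=1}^L \sum_e \hat g(q_\tau(e) | e)$.

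Next, I would define the averaged flow $\bar q(e) = \frac{1}{L}\sum_{\tau=1}^L q_\tau(e)$ and the averaged distribution $\bar w(v) = \frac{1}{L}\sum_{\tau=1}^L w_\tau(v)$. Summing the flow-based transition rule \eqref{eq:strans} over $\tau = 1, \ldots, L$, the left-hand side telescopes to $0$ thanks to $w_{L+1} = w_1$, yielding exactly the stability condition \eqref{eq:ststrans}: $\sum_{e \in \vout(v)} \bar q(e) = \sum_{e \in \vin(v)} \bar q(e)$ for every $v \in V$. Averaging the per-timestep feasibility constraint \eqref{eq:feasible} over the same range gives $\sum_{e \in \vout(v)} \bar q(e) \leq \bar w(v)$. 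Hence the pair $(\bar w, \pi)$ with stationary action $\pi(\bar w) = \bar q$ is a feasible stable dispatching scheme in $\mathcal{M}$. Note that $\bar w$ is a convex combination of elements of $S_d \subseteq S = \Delta(V)$, so $\bar w \in S$, even though $\bar w$ need not lie in the discretized set $S_d$ itself; this is harmless because the stable scheme is evaluated in the original (non-discretized) MDP $\mathcal{M}$.

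Finally, because the ironed reward $\hat g(\cdot | e)$ is concave in the flow by construction in \autoref{lem:flow}, Jensen's inequality gives $\frac{1}{L}\sum_{\tau=1}^L \hat g(q_\tau(e) | e) \leq \hat g(\bar q(e) | e)$ for each edge $e$. Summing over $e$ shows that the per-period objective $\sum_e \hat g(\bar q(e) | e)$ of the stable scheme $(\bar w, \pi)$, which is precisely its time-average objective in $\mathcal{M}$, is at least the time-average objective of $\pi_d$ in $\mathcal{DM}$. The main obstacle in the argument is establishing flow conservation for $\bar q$, which is exactly where the cyclicity $w_{L+1} = w_1$ is used to telescope \eqref{eq:strans}; once that is in place, concavity of $\hat g$ closes the proof cleanly.
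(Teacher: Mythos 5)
Your proposal is correct and follows essentially the same route as the paper's own proof: average the flows over the eventual cycle of the stationary policy, telescope the transition equations to get the stability condition \eqref{eq:ststrans}, average the feasibility constraint, and apply Jensen's inequality to the concave ironed objective. The only cosmetic difference is that the paper also spells out how $\pi$ behaves off the state $\bar w$ (driving any other state back to $\bar w$ in finitely many steps via strong connectivity), which you omit but which does not affect the argument.
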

\begin{proof}
	Consider policy $\pi_d$ in $\mathcal{DM}$. Starting from any state in $S_d$
	with policy $\pi_d$, let $\{w_\tau\}_0^\infty$ be the subsequent state
  sequence. Since $\mathcal{DM}$ has finitely many states and policy $\pi_d$ is
  a stationary policy, there must be an integer $n$, such that $w_n=w_m$ for
  some $m<n$ and from time step $m$ on, the state sequence become a periodic
  sequence. Define
	\begin{gather*}
  \tsty
	\bar{w}=\frac{1}{n-m}\sum_{k=m}^{n-1}w_k, \quad
	\bar{q}=\frac{1}{n-m}\sum_{k=m}^{n-1}\pi_d(w_k)
	\end{gather*}
	Denote by $\pi_d(w_k|e)$ or $q_d(e)$ the flow at edge $e$ of the decision
  $\pi_d(w_k)$. Sum the transition equations for all the time steps $m\le k <n$,
  and we get:
	\begin{align*}
	    \tsty \sum_{k=m}^{n-1}w_{k+1}(v) - \sum_{k=m}^{n-1}w_{k}(v)
	  = \tsty \sum_{k=m}^{n-1}\left(\sum_{\vin(v)}\pi_d(w_k|e)\right)
      - \sum_{k=m}^{n-1}\left(\sum_{\vout(v)}\pi_d(w_k|e)\right)
	\end{align*}
	\begin{gather*}
    \tsty
	\bar{w}(v)=\bar{w}(v)-\left(\sum_{\vout(v)}\bar{q}(e)\right) +\left(\sum_{\vin(v)}\bar{q}(e)\right)
	\end{gather*}
	Also, policy $\pi_d$ is a valid policy, so $\forall v\in V$ and $\forall m\le k<n$:
	\begin{gather*}
    \tsty
	\sum_{\vout(v)}q_k(e) \leq w_k(v)
	\end{gather*}
	Summing over $k$, we have:
	\begin{gather*}
    \tsty
	\sum_{\vout(v)}\bar{q}(e)\le \bar{w}(v)
	\end{gather*}
	Now consider the original problem $\mathcal{M}$. Let $w=\bar{w}$ and $\pi$ be any stationary policy such that:
	\begin{itemize}
		\itemsep0em
		\item $\pi(w)=\bar{q}$;
		\item starting from any state $w'\ne w$, policy $\pi$ leads to state $w$ within finitely many steps.
	\end{itemize}
	Note that the second condition can be easily satisfied since the graph $G$ is strongly connected.

	With the above definitions, we know that $(w, \pi)$ is a stable dispatching scheme. Now we compare the objectives of the two policies $\pi_d$ and $\pi$. The time-average objective function is not sensitive about the first finitely many immediate objectives. And since the state sequences of both policies $\pi_d$ and $\pi$ are periodic, Their time-average objectives can be written as:
	\begin{gather*}
    \tsty
		\obj(\pi_d) =\frac{1}{n-m}\sum_{k=m}^{n-1}\sum_{e\in E}\hat{g}(q_d(e)|e)\\
    \tsty
		\obj(\pi) = \sum_{e\in E} \hat{g}(\bar{q}(e)|e)
	\end{gather*}
	By Jensen's inequality, we have:
	\begin{gather*}
	\obj(\pi_d) \tsty
		=\frac{1}{n-m}\sum_{k=m}^{n-1}\sum_{e\in E}\hat{g}(q_d(e)|e)
	\tsty
    \le \sum_{e\in E}\hat{g}\left[\left(\frac{1}{n-m}\sum_{k=m}^{n}q_d(e) \right)\big|e \right]
    \tsty = \sum_{e\in E} \hat{g}(\bar{q}(e)|e) = \obj(\pi)
	\end{gather*}
\end{proof}

With \autoref{thm:stable}, we know there exists a stable dispatching scheme that
dominates the  optimal stationary policy of the our discretized MDP. Thus we now
only focus on stable dispatching schemes. The problem of finding an optimal
stable dispatching scheme can be formulated as a convex program with linear
constraints:
\begin{equation}
\label{prog:static}  
\begin{aligned}
  \text{maximize}   \quad & \tsty \sum_{e \in E} \hat g(q | e)  \\
  \text{subject to} \quad & \eqref{eq:feasible} ~\text{and}~ \eqref{eq:ststrans}.
\end{aligned}
\end{equation}

Because $\hat{g}(q|e)$ is concave, the program is convex. Since all convex
programs can be solved in polynomial time, our algorithm for finding optimal
stationary policy of maximizing the objective functions is efficient.


\section{Characterization of optimality}\label{sec:dual}

In this section, we characterize the optimal solution via dual analysis. For the
ease of presentation, we consider Program \ref{prog:static} in the static environment
with infinite horizon. Our characterization directly extends to
the dynamic environment.


The Lagrangian is defined to be
\begin{align*}
	 L(q, \lambda, \mu)
	 = & \tsty
      - \sum_{e \in E} \hat{g}(q|e)
			+ \lambda\left(\sum_{e \in E}q(e) - 1\right)
	 \tsty
      + \sum_{v \in V}\mu_v
					\left(\sum_{\vout(v)}q(e) - \sum_{\vin(v)}q(e)\right) \\
	 =& \tsty
      - \lambda + \sum_{e \in E} \left[- \hat g(q|e)
			+ (\lambda + \mu_s - \mu_{t})q(e)\right],
\end{align*}
where $s$ and $t$ are the origin and destination of $e$, i.e., $e = (s, t)$, and $\lambda$ and $\mu$ are Lagrangian multipliers with $\lambda\ge 0$. Note that we implicitly transform program \ref{prog:static} to the standard form that minimizes the objective $-\sum_{e \in E}\hat{g}(q^*|e)$.

The Lagrangian dual function is
\begin{align*}
\tsty
h(\lambda, \mu)=\inf_q L(q,\lambda, \mu)
=\sum_{e \in E} \left[- \hat g(\tilde{q}|e)
+ (\lambda + \mu_s - \mu_{t})\tilde{q}(e)\right],
\end{align*}
where $\tilde{q}(e)$ is a function of $\lambda$ and $\mu$ such that $\lambda+\mu_s-\mu_t=\hat{g}'(\tilde{q}|e)$, where $\hat{g}'(\tilde{q}|e)$ is the derivative of the objective function with respect to flow $q$.
The dual program corresponding to Program \ref{prog:static} is
\begin{equation}\label{prog:static_dual}
\begin{aligned}
\text{maximize} \quad & h(\lambda, \mu) \\
\text{subject to}\quad & \lambda \ge 0
\end{aligned}
\end{equation}

According to the KKT conditions, we have the following characterization for optimal solutions.
\begin{theorem}
	\label{thm:dual}
	Let $q^*(e)$ be a feasible solution to the primal program \ref{prog:static} and $(\lambda^*, \mu^*)$ be a feasible solution to the dual program \ref{prog:static_dual}. Then both $q^*(e)$ and $(\lambda^*, \mu^*)$ are primal and dual optimal with $-\sum_{e \in E}\hat{g}(q^*|e)=h(\lambda^*, \mu^*)$, if and only if
	\begin{gather}
    \tsty
	\lambda^*\left(\sum_{e \in E} q^*(e)-1 \right) = 0  \label{eq:slackness}\\
	\hat g'(q^*|e) = \lambda^* + \mu_s^* - \mu_{t}^*, \forall v \in V \label{eq:gradient}
	\end{gather}
\end{theorem}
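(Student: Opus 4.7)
The plan is to recognize the statement as the textbook KKT characterization for convex programs with affine constraints and apply it directly to Program~\ref{prog:static}. First, I would verify that the hypotheses for KKT to be both necessary and sufficient are met: the objective $-\sum_{e \in E} \hat g(q|e)$ is convex since each $\hat g(\cdot|e)$ is concave (the ironed objective constructed in \autoref{lem:flow}), and all constraints are affine (the supply budget $\sum_e q(e) \leq 1$ together with the stationarity equations \eqref{eq:ststrans} and non-negativity on $q$). Because the constraints are affine, Slater's condition is automatic, so strong duality holds and the KKT conditions characterize optimality.

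Next, I would derive the two stated conditions from KKT. Differentiating the Lagrangian supplied in the excerpt with respect to $q(e)$ for an edge $e = (s,t)$ gives
\[
\frac{\partial L}{\partial q(e)} = -\hat g'(q|e) + \lambda + \mu_s - \mu_t,
\]
and the first-order (stationarity) condition of KKT sets this to zero, which is precisely \eqref{eq:gradient}. Complementary slackness on the single inequality constraint $\sum_e q(e) - 1 \leq 0$ is exactly \eqref{eq:slackness}; the equality constraints produce no complementary slackness terms, and dual feasibility is simply $\lambda^* \geq 0$, which is included in the dual program.

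For the ``if'' direction, assume $q^*$ is primal feasible, $(\lambda^*, \mu^*)$ is dual feasible, and \eqref{eq:slackness}, \eqref{eq:gradient} both hold. Since $L(\cdot, \lambda^*, \mu^*)$ is convex in $q$, condition \eqref{eq:gradient} implies that $q^*$ attains $\inf_q L(q, \lambda^*, \mu^*) = h(\lambda^*, \mu^*)$. Substituting $q^*$ into $L$, the $\mu^*$-terms vanish by primal feasibility of the stationarity constraints \eqref{eq:ststrans}, and the $\lambda^*$-term vanishes by \eqref{eq:slackness}, leaving $L(q^*, \lambda^*, \mu^*) = -\sum_e \hat g(q^*|e)$. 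Combined with weak duality, this equality pins down both as optimal with zero duality gap. For the ``only if'' direction, strong duality guarantees a zero gap at optimum, and unwinding the chain of inequalities $-\sum_e \hat g(q^*|e) \geq L(q^*, \lambda^*, \mu^*) \geq \inf_q L(q, \lambda^*, \mu^*) = h(\lambda^*, \mu^*)$ forces both \eqref{eq:slackness} (from the first inequality being tight) and \eqref{eq:gradient} (from $q^*$ attaining the infimum of a convex function).

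The only subtlety I anticipate is that $\hat g(\cdot|e)$, being the concave ironed envelope, may fail to be differentiable at the endpoints of ironed intervals. In that case $\hat g'(q^*|e)$ in \eqref{eq:gradient} should be read as a supergradient of $\hat g(\cdot|e)$ at $q^*$, and the condition becomes the standard subdifferential optimality condition $\lambda^* + \mu_s^* - \mu_t^* \in \partial(-\hat g)(q^*|e)$; this is a purely notational matter and does not affect the argument. Apart from this technicality, the proof is a direct invocation of KKT for convex programs satisfying Slater's condition.
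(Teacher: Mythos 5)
Your proposal is correct and follows essentially the same route as the paper: both argue via the Lagrangian that condition \eqref{eq:gradient} (by concavity of $\hat g$) means $q^*$ attains $\inf_q L(q,\lambda^*,\mu^*)=h(\lambda^*,\mu^*)$, and that \eqref{eq:slackness} together with primal feasibility collapses $L(q^*,\lambda^*,\mu^*)$ to $-\sum_{e\in E}\hat g(q^*|e)$, giving a zero duality gap. Your write-up is in fact slightly more careful than the paper's, which only spells out the sufficiency direction and does not address the affine-constraint/Slater justification for strong duality or the possible non-differentiability of the ironed $\hat g$ at ironing endpoints, but these refinements do not change the underlying argument.
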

\begin{proof}
	According to the definition of $h(\lambda, \mu)$, we have $h(\lambda^*, \mu^*)=\inf_qL(q,\lambda^*, \mu^*)$. Since $\hat{g}(q|e)$ are concave functions, Equation \ref{eq:gradient} is equivalent to the fact that $q^*(e)$ minimizes the function $L(q,\lambda^*, \mu^*)$.
	\begin{align*}
	h(\lambda^*, \mu^*)=& \tsty \inf_qL(q,\lambda^*, \mu^*) \\
	=&L(q^*,\lambda^*, \mu^*) \\
	=&\tsty - \sum_{e \in E} \hat{g}(q^*|e)
	+ \lambda^*\left(\sum_{e \in E}q^*(e) - 1\right)
	\tsty + \sum_{v \in V}\mu^*_v
	\left(\sum_{\vout(v)}q^*(e) - \sum_{\vin(v)}q^*(e)\right) \\
	=&\tsty - \sum_{e \in E} \hat{g}(q^*|e),
	\end{align*}
	where the last equation uses the Equation \ref{eq:slackness} and the fact that $q^*(e)$ is feasible.
\end{proof}
Continuing with \autoref{thm:dual}, we will analyze the dual variables from the economics angle and some interesting insights into this problem for real applications.
\subsection{Economic interpretation}\label{ssec:econinter}
The dual variables have useful economic interpretations (see
\cite[Chapter 5.6]{boyd2004convex}). $\lambda^*$ is the system-wise marginal
contribution of the drivers (i.e. the increase in the objective function when a
small amount of drivers are added to the system). Note that by the complementary
slackness (Equation \ref{eq:slackness}), if $\lambda^* > 0$, the sum of the
total flow must be $1$, meaning that all drivers are busy, and more requests can
be accepted (hence increase revenue) if more drivers are added to the system.
Otherwise, there must be some idle drivers, and adding more drivers cannot
increase the revenue.

$\mu_v^*$ is the marginal contribution of the drivers at node $v$. If we allow
the outgoing flow from node $v$ to be slightly more than the incoming flow to
node $v$, then $\mu_v$ is the revenue gain from adding more drivers at node $v$.

\subsection{Insights for applications}\label{ssec:insight}
The way we formulate and solve the problem, in fact, naturally leads to two
interesting insights into this problem, which are potentially useful for real
applications.

\paragraph{1. Scalability} In our model, the size of the convex program
increases linearly in the number of edges, hence quadratically in the number of
regions. This could be one hidden feature that is potentially an obstacle to
real applications, where the number of regions in a city might be quite large.

A key observation to the issue is that any dispatching policy induced by a real
system is a feasible solution of our convex program and any improvement (maybe
via gradient descent) from such policy in fact leads to a better solution for
this system. In other words, it might be hard to find the exact optimal or
nearly optimal solutions, but it is easy to improve from the current state. Therefore, in practice, the
platform can keep running the optimization in background and apply the most
recent policy to gain more revenue (or achieve a higher value of some
other objectives).

\paragraph{2. Alternative solution} As suggested by the characterization and its
economic interpretation, instead of solving the convex programs directly, we
also have an alternative way to find the optimal policy by solving the dual program. The optimal policy can
be easily recovered from dual optimal solutions. In particular, according to the economic interpretation of
dual variables, we need to estimate the marginal contributions of drivers.

More importantly, the number of dual variables ($=$ the number of regions) is
much smaller than the number of primal variables ($=$ the number of edges $\approx$
square of the former). So solving the dual program may be more efficient when applied to real systems, and is also of independent interest of this paper.

\section{Empirical Analysis}\label{sec:experiment}
We design experiments to demonstrate the good performance of our algorithms for real applications. In this section, we first describe the dataset and then introduce how to extract useful information for our model from the dataset. Two benchmark policies, \fixed and \surge, are compared with our pricing policy. The result analysis includes demand-supply balance and instantaneous revenue in both static and dynamic environments.
	\subsection{Dataset}

	We perform our empirical analysis based on a public dataset from a major ride-sharing
	company. The dataset includes the orders in a
	city for three consecutive weeks 
	and the total number of orders is more than $8.5$ million. An order is
	created when a passenger send a ride request to the platform.

	Each order consists of a unique order ID, a passenger ID, a driver ID, an
	origin, a destination, and an estimated price, and the timestamp when the order is created (see
	\autoref{table:row} for example). The driver ID might be empty if no
	driver was assigned to pick up the passenger. There are $66$ major regions
	of the city and the origins and destinations in the dataset are given as the
	region IDs. We say a request is related to a region if the region is either the origin or the destination of the request. And the popularity of a region is defined as the number of related requests. Since some of the regions in the dataset have very low popularity values, we only consider the most popular $21$ or $5$ regions in
	the two settings (see \autoref{ssec:static_env} and \autoref{ssec:dynamic_env}
	respectively for details). The related
	requests of the most popular $21$ (or $5$) regions cover about $90\%$ (or
	$50\%$)	of the total requests in the original dataset.

	For ease of presentation, we relabel the region IDs in descending order
	of their popularities (so region \#$1$ is the most popular region).
	\autoref{fig:connectivity} illustrates the frequencies of requests on
	different origin-destination pairs. From the figure, one can see that
	the frequency matrix is almost symmetric and the destination of a request is
	most likely to be in the same region as the origin.

  \begin{table}[h]
    \begin{center}
  		\begin{tabular}{| c | c | c | c | c | c | c |}
  			\hline
  			order & driver & user & origin & dest & price  & timestamp  \\
  			\hline
  			hash  & hash   & hash & hash   & hash & $37.5$ & 01-15 00:35:11  \\
  			\hline
  		\end{tabular}
  		\caption{An example of a row in the dataset, where ``hash'' stands for some
  		hash strings of the IDs that we didn't show the exact value here.}
  		\label{table:row}
  	\end{center}
  \end{table}

	\begin{figure}[h]
    \hfill
		\includegraphics[width=0.45\textwidth]{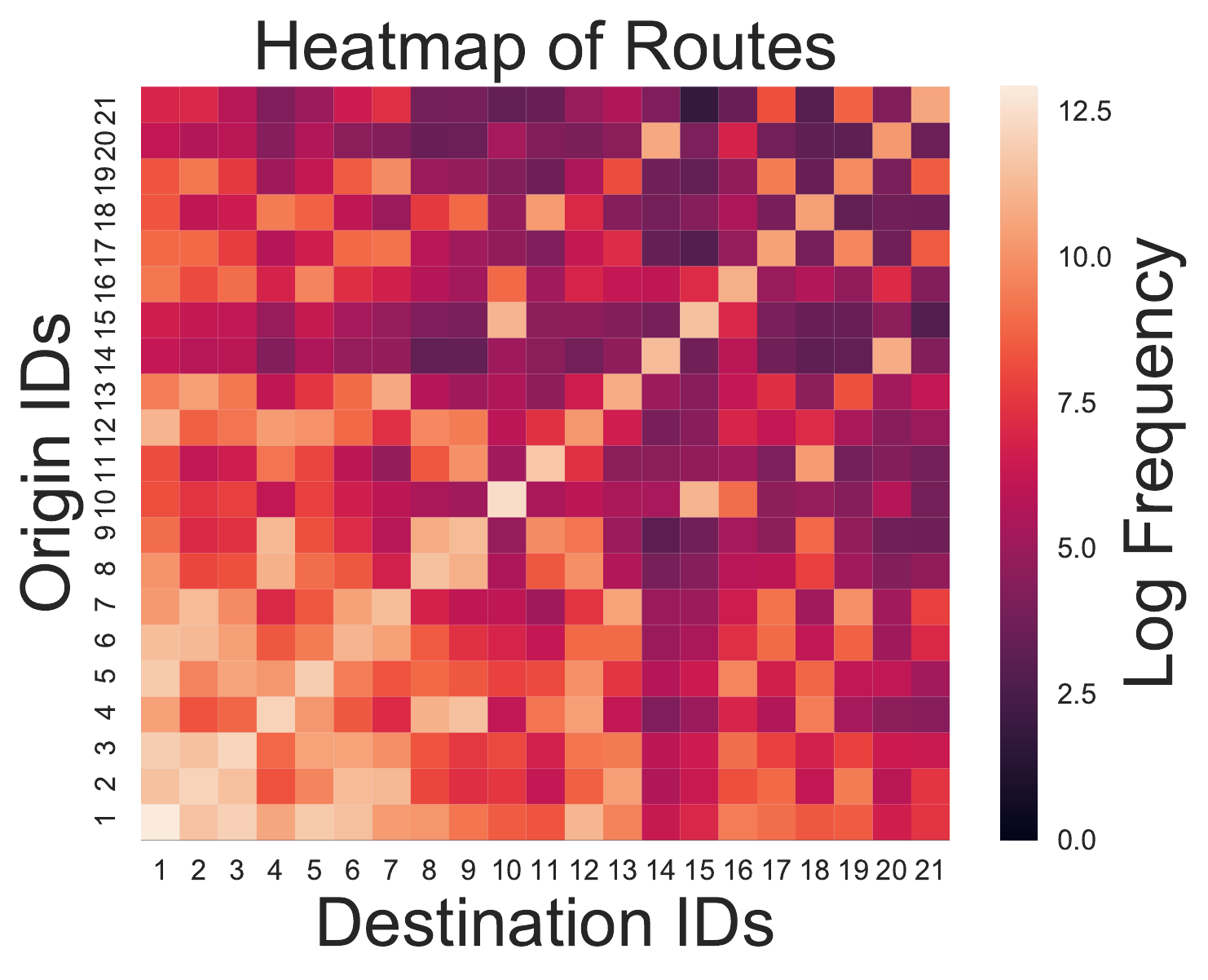}
    \hfill\phantom{1}
		\caption{The logarithmic frequencies of request routes.}
		\label{fig:connectivity}
	\end{figure}

	\begin{figure}
    \hfill
		\subfigure[Time \& price without filtering]{\includegraphics[width=0.3\textwidth]{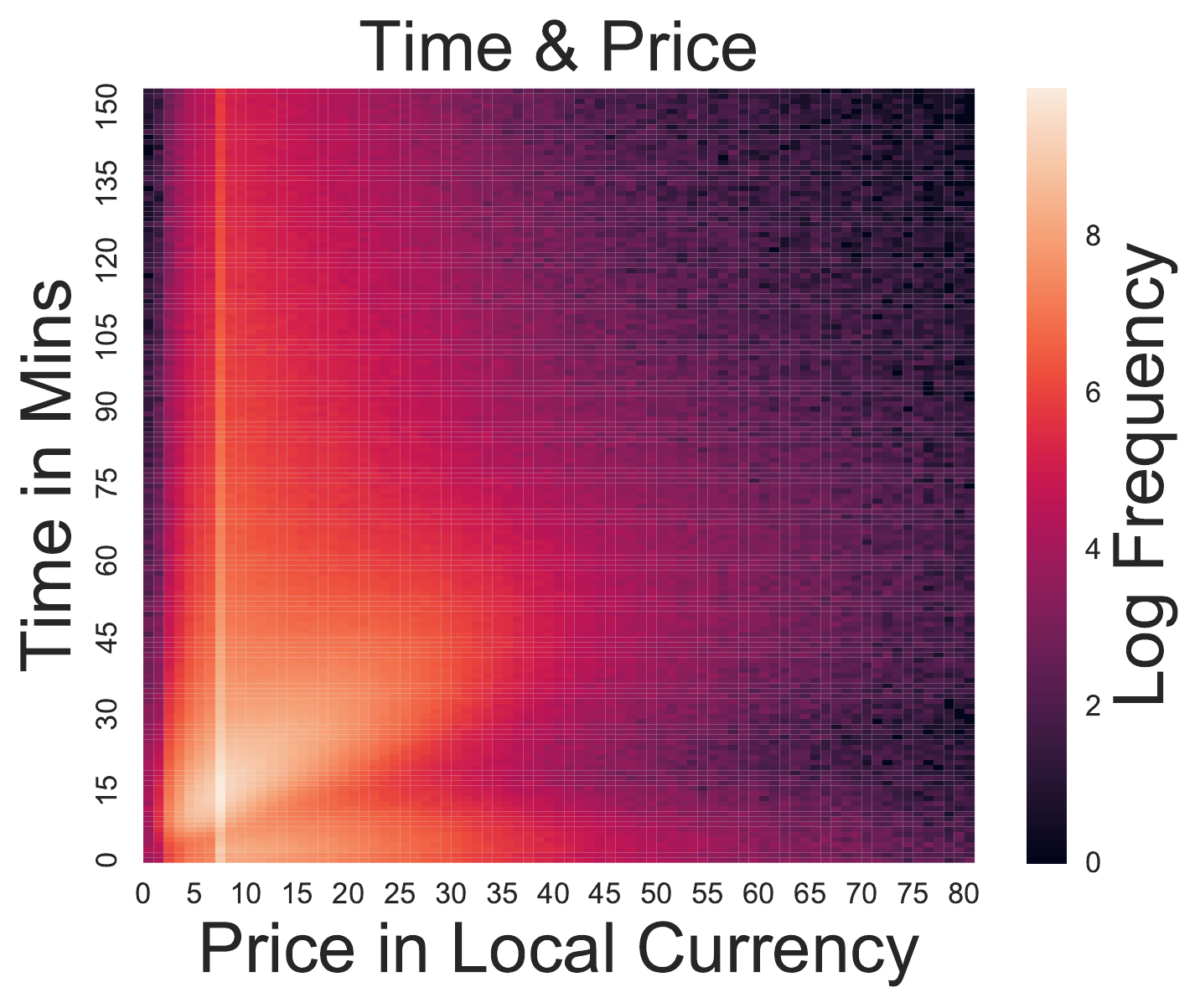}\label{sfig:nofilter}}%
		\hfill %
		\subfigure[Time \& price with filtering]{\includegraphics[width=0.3\textwidth]{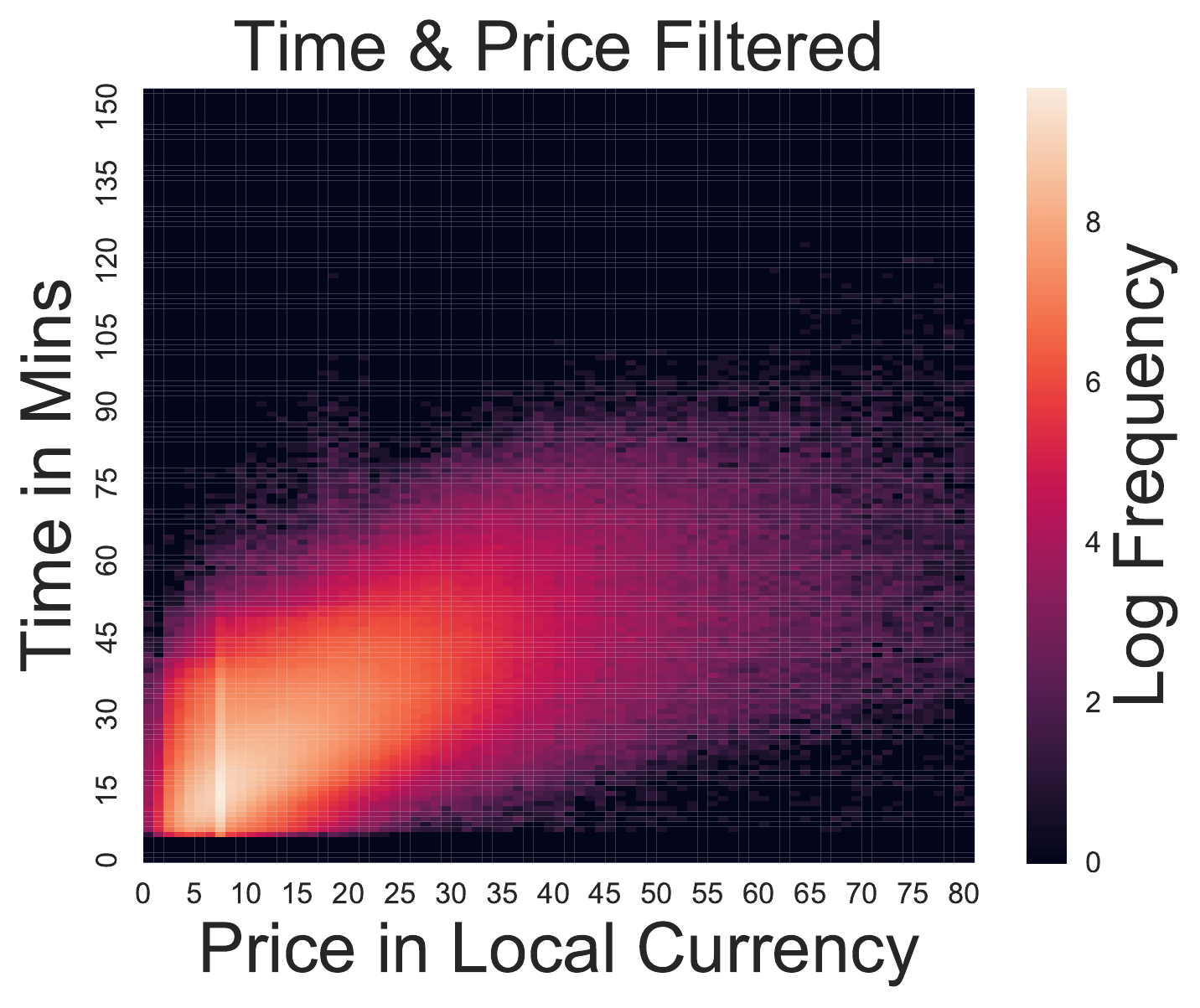}\label{sfig:filter}}
    \hfill\phantom{1}
		\caption{The logarithmic frequencies of $(\text{time},~\text{price})$ pairs, with or
						 without filtering the ``abnormal'' requests.}
		\label{fig:time-price}
	\end{figure}

	\subsection{Data preparation}\label{ssec:clean}
	The time consumptions from nodes to nodes and demand curves for edges are known in our model. However, the dataset doesn't provide such information directly. We filter out "abnormal" requests and apply a linear regression to get the relationship of the travel time and the price. It makes possible to infer the travel time from the order price. For the demand curves, we observe the values of each edge and fit them to lognormal distributions.

	\paragraph{Distance and travel time}
	The distance (or equivalently the
	travel time) from one region to another is required to perform our simulation. We approximate the travel time by
	the time interval of two consecutive requests assigned to the same driver. In
	\autoref{sfig:nofilter}, we plot the frequencies of requests with certain
	(time, price) pairs. We cannot see clear relationship
	between time and price, which are supposed to be roughly linearly
	related in this figure.\footnote{The price of a ride is the maximum of a two-dimension linear
	function of the traveled distance and spent time and a minimal price (which is
	$7$ CNY as one can see the vertical bright line at $\text{price} = 7$ in
	\autoref{fig:time-price}). Since the traveled distance is almost linearly
	related to the spent time, the price, if larger than the minimal price, should
	also be almost linearly related to the traveling time. Readers may notice that
	from the figures, there are many requests with price less than $7$ (even as
	low as $0$). This is because there are many coupons given to passengers to stimulate their demand for riding and the prices given in the dataset are
	after applying the coupons.} We think that this is due to the existence of two types of ``abnormal'' requests:
	\begin{itemize}
		\itemsep0em
		\item {\em Cancelled requests}, usually with very short completion time but
					not necessarily low prices (appeared in the right-bottom part of the
					plot);
		\item {\em The last request of a working period}, after which the driver
					might go home or have a rest. These requests usually have very long completion time
					but not necessarily high enough prices (appeared in the left-top part of the
					plot).
	\end{itemize}

	With the observations above, we filter out the requests with significantly longer
	or shorter travel time compared with most of the requests with the
	same origin and destination. \autoref{sfig:filter} illustrates the frequencies
	of requests after such filtering. As expected, the brightest region roughly
	surrounds the $30^{\circ}$ line in the figure. By applying a standard linear
	regression, the slope turns out to be approximately $0.5117$ CNY per minute.
	One may also notice some ``right-shifting shadows'' of the brightest region,
	which are caused by the surge-pricing policy with different multipliers.

	\begin{figure}
    \hfill
		\includegraphics[width=0.35\textwidth]{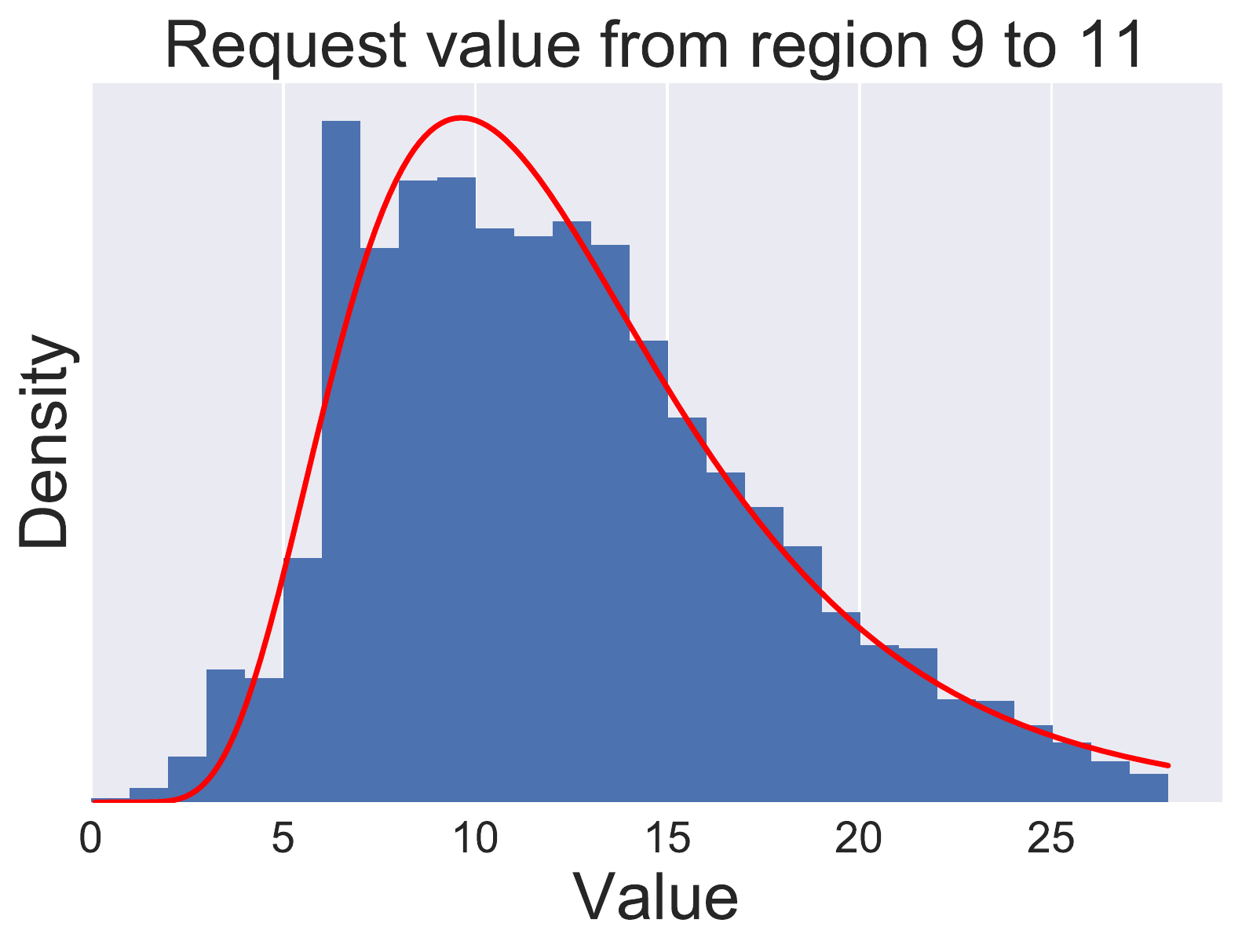}%
		\hfill %
		\includegraphics[width=0.35\textwidth]{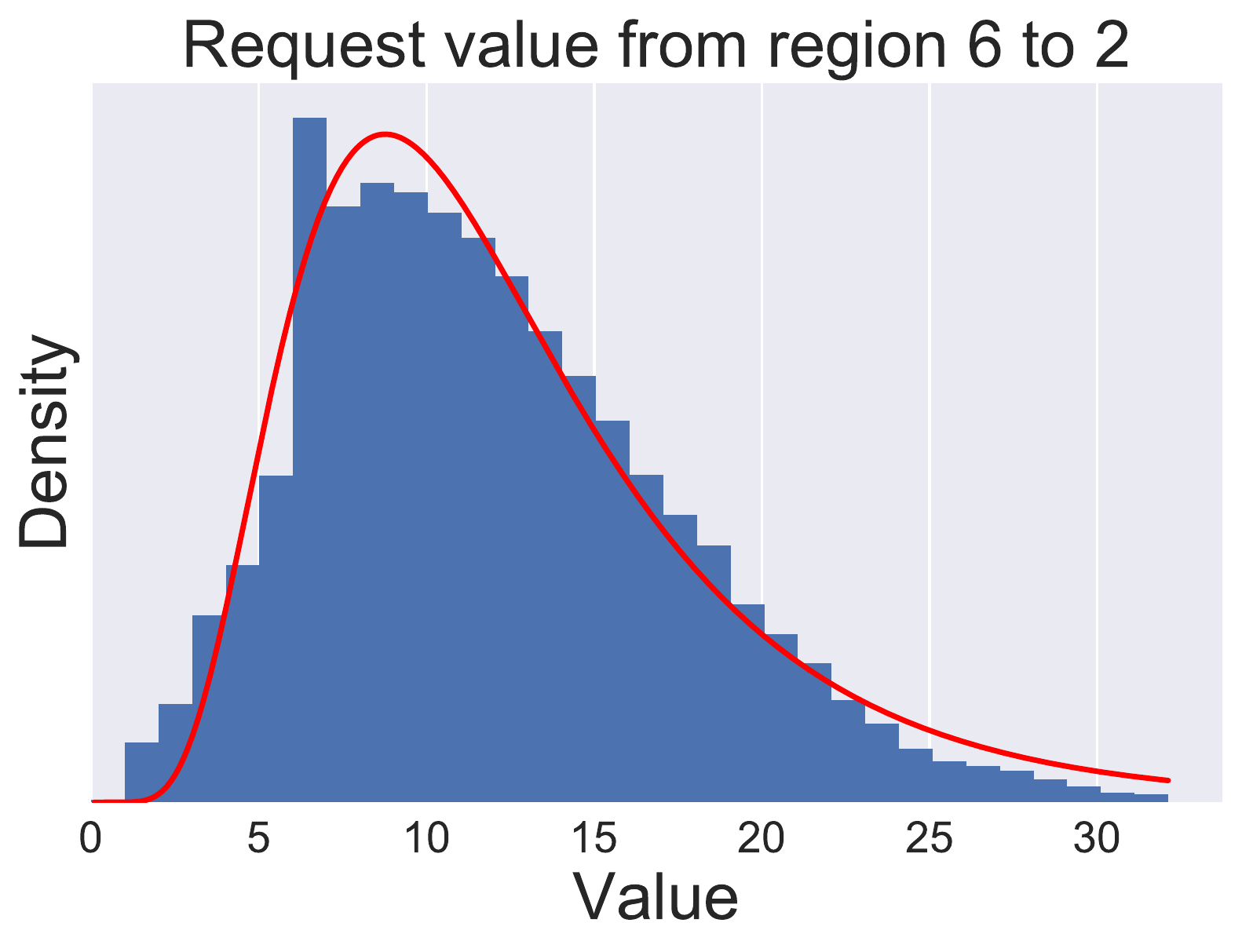}
    \hfill\phantom{1}
		\caption{Fitting request values to lognormal distributions.}\label{fig:lognormal}
	\end{figure}

	\paragraph{Estimation of demand curves}
	To estimate the demand curves, we first gather all the requests along the same
	edge (also within the same time period for dynamic environment, see
	\autoref{ssec:dynamic_env}) and take the prices associated with the requests
	as the values of the passengers. Then, we fit the values of each edge (and each
	time period for dynamic environment) to a lognormal distribution. The reason
	that we choose the lognormal distribution is two-fold: (i) the data fits
	lognormal distributions quite well (see \autoref{fig:lognormal} as examples);
	(ii) lognormal distributions are commonly used in some related literatures
	\cite{ostrovsky2011reserve,lahaie2007revenue,roberts2016ranking,shen2017practical}.

	We set the cost of traveling to be zero, because we do not
	have enough information from the dataset to infer the cost.

	\subsection{Benchmarks}\label{ssec:benchmark}

	We consider two benchmark policies:
	\begin{itemize}
		\item \fixed: fixed per-minute pricing, i.e., the price of a ride equals to
					the estimated traveling time from the origin to the destination of
					this ride multiplied by a per-minute price $\alpha$, where $\alpha$ is
					a constant across the platform.
		\item \surge: based on \fixed policy, using surge pricing to clear the local
					market when supply is not enough. In other words, the price of a ride
					equals to the estimated traveling time multiplied by $\alpha\beta$,
					where $\alpha$ is the fixed per-minute price and $\beta \geq 1$ is the
					surge multiplier. Note that $\beta$ is dynamic and can be different
					for requests initiated at different regions, while the requests
					initiated at the same regions will share the same surge multipliers.
	\end{itemize}

	In the rest of this section, we evaluate and compare our dynamic pricing policy \dynam with
	these two benchmarks in both static and dynamic environments.

	\begin{figure}
    \hfill
		\subfigure[Static environment]{\includegraphics[width=0.35\textwidth]{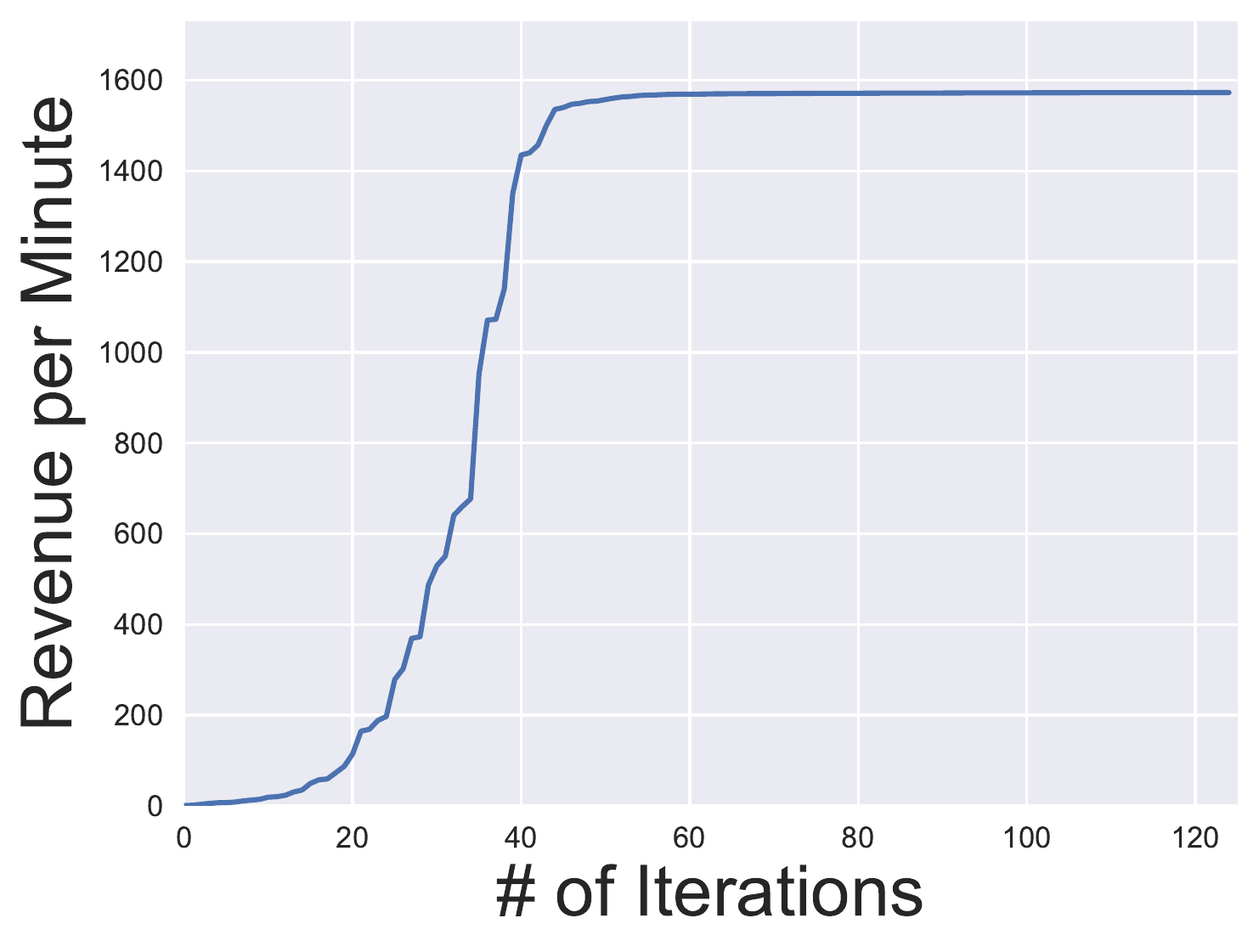}\label{sfig:static-iters}}%
		\hfill %
		\subfigure[Dynamic environment]{\includegraphics[width=0.35\textwidth]{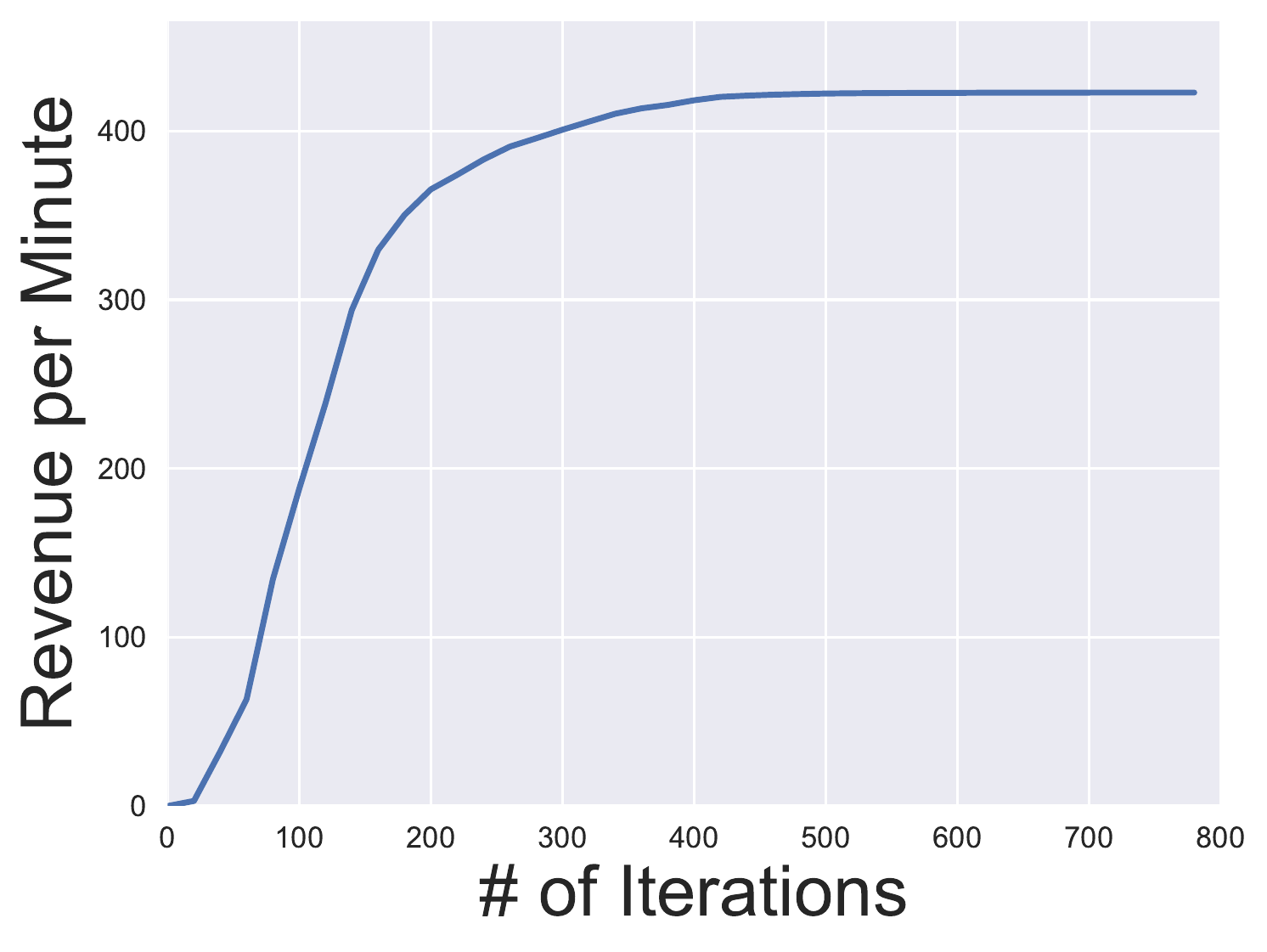}\label{sfig:dynamic-iters}}
    \hfill\phantom{1}
		\caption{Convergence of revenue.}
		\label{fig:converge}
	\end{figure}

	\begin{figure}
    \hfill
		\subfigure[Static environment]{\includegraphics[width=0.35\textwidth]{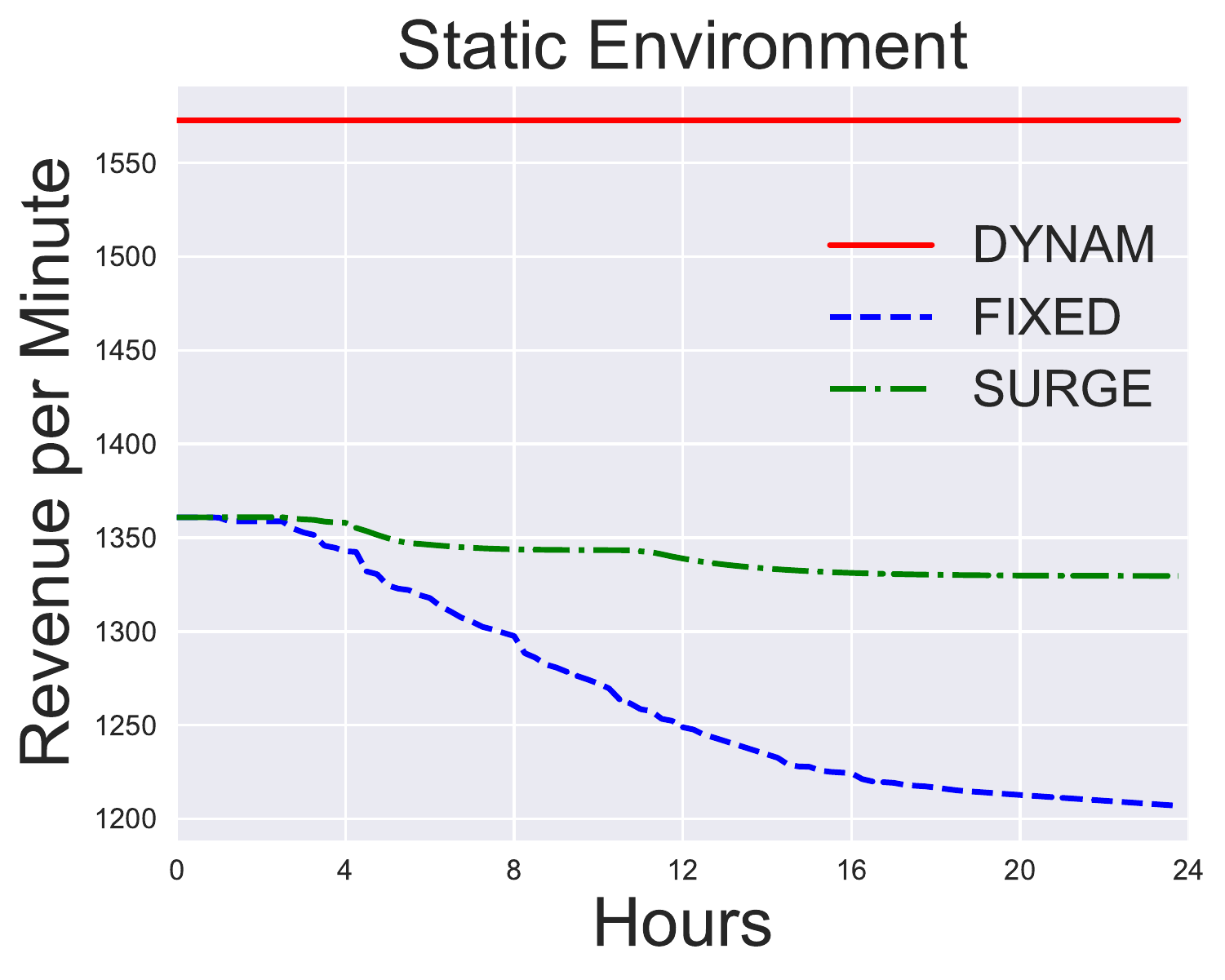}\label{sfig:static-revs}}%
		\hfill %
		\subfigure[Dynamic environment]{\includegraphics[width=0.35\textwidth]{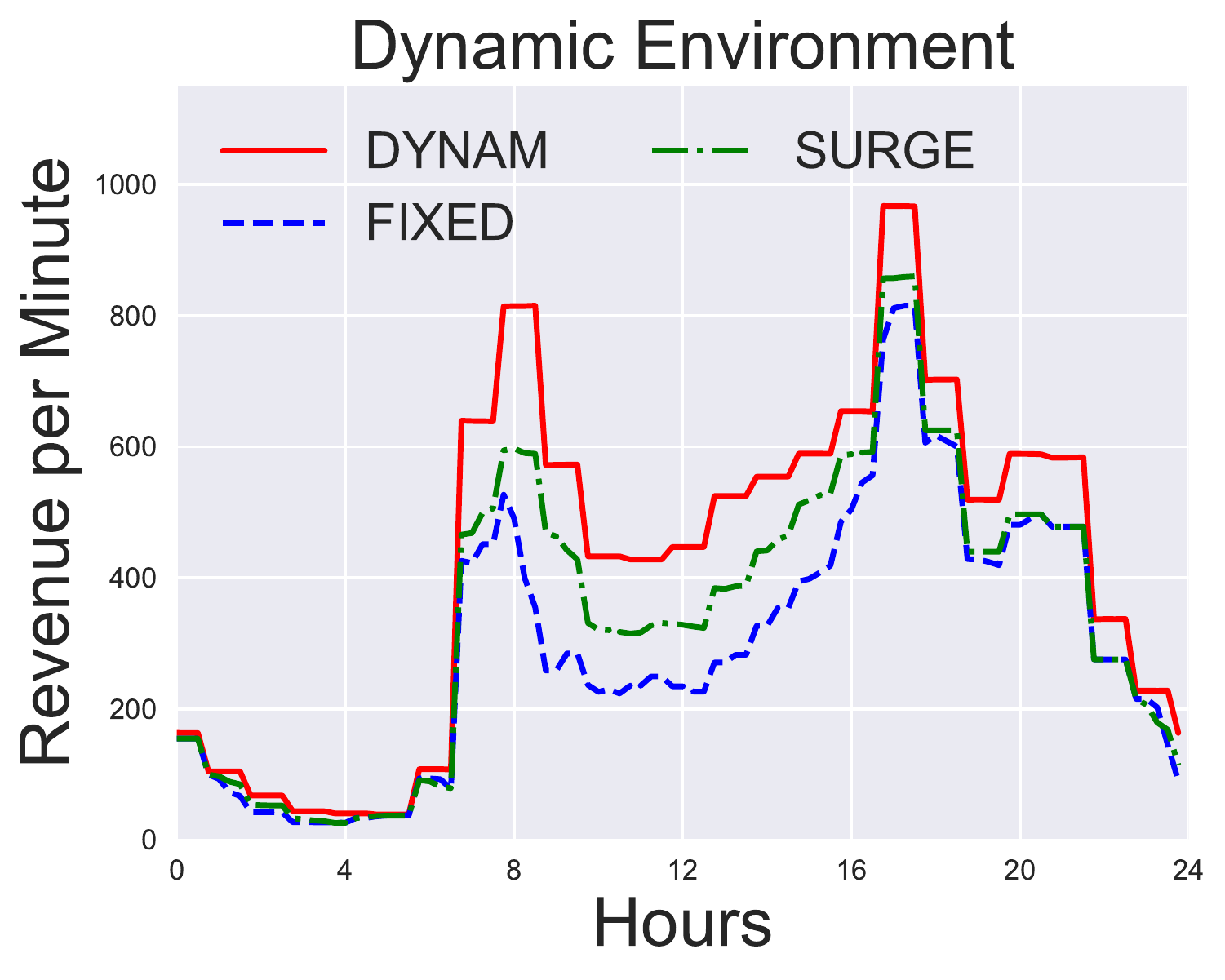}\label{sfig:dynamic-revs}}
    \hfill\phantom{1}
		\caption{Instantaneous revenue in different environments.}
		\label{fig:revs}
	\end{figure}

	\begin{figure*}[t]
		\includegraphics[width=0.1975\textwidth]{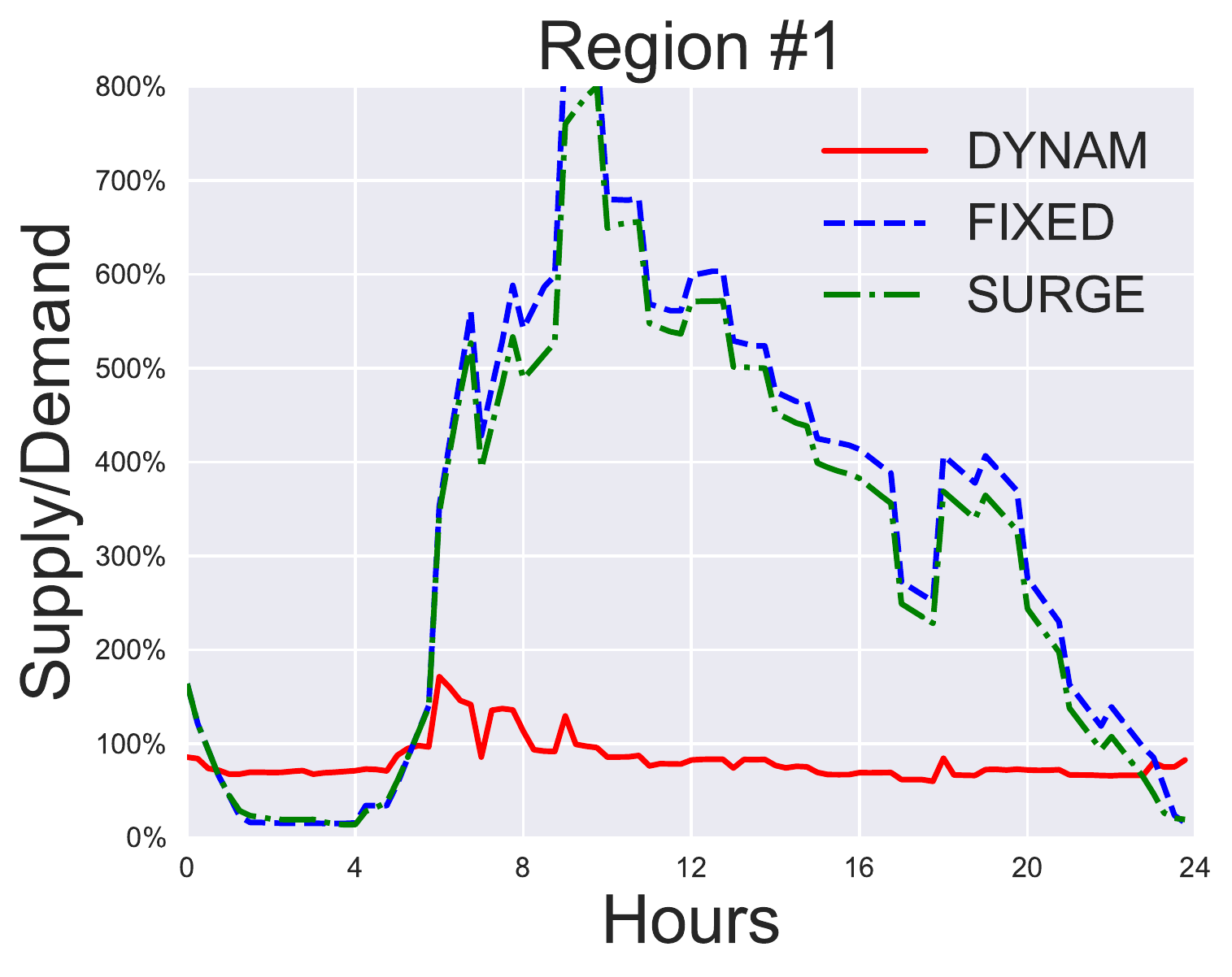}%
		\hfill %
		\includegraphics[width=0.1975\textwidth]{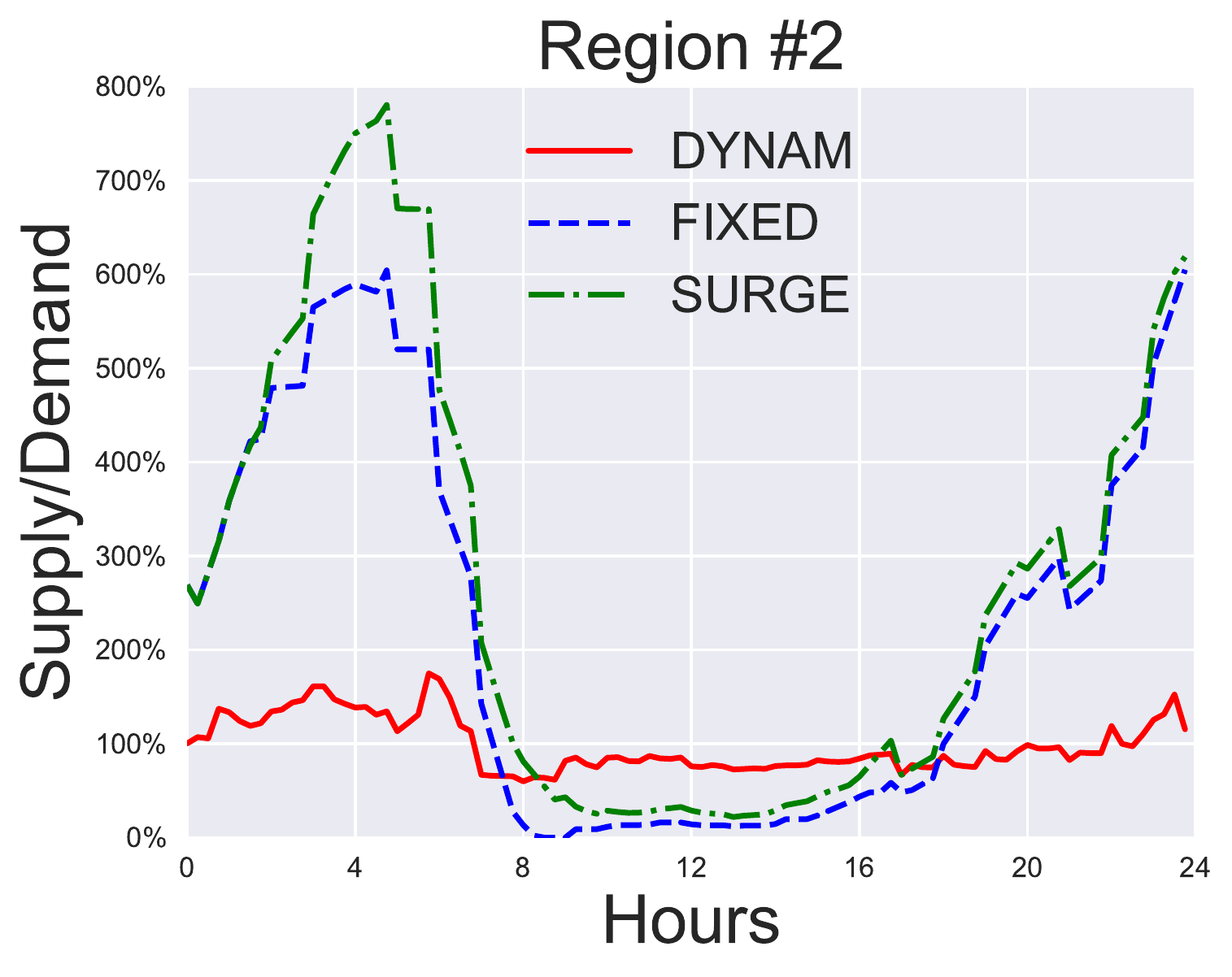}%
		\hfill %
		\includegraphics[width=0.1975\textwidth]{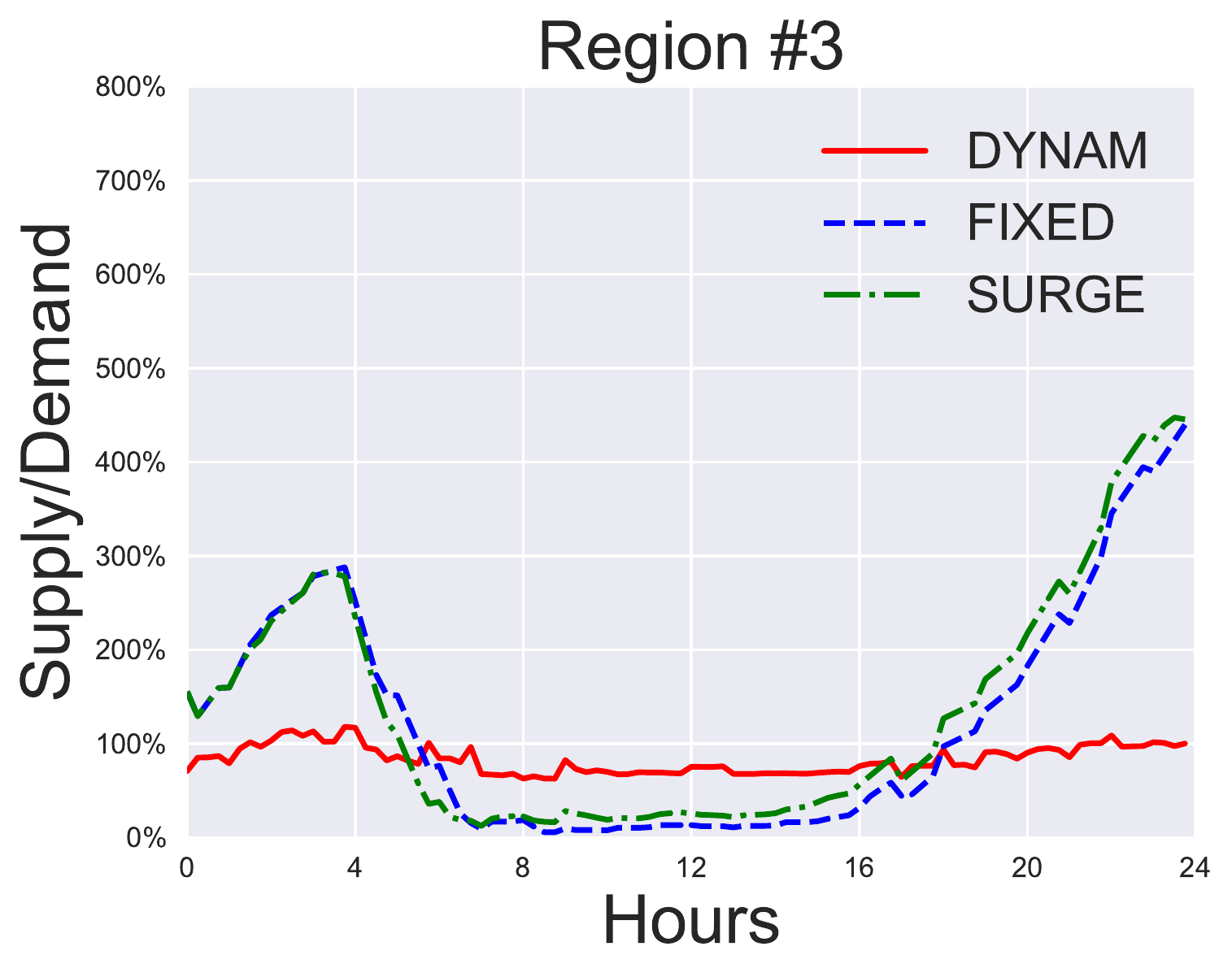}%
		\hfill %
		\includegraphics[width=0.1975\textwidth]{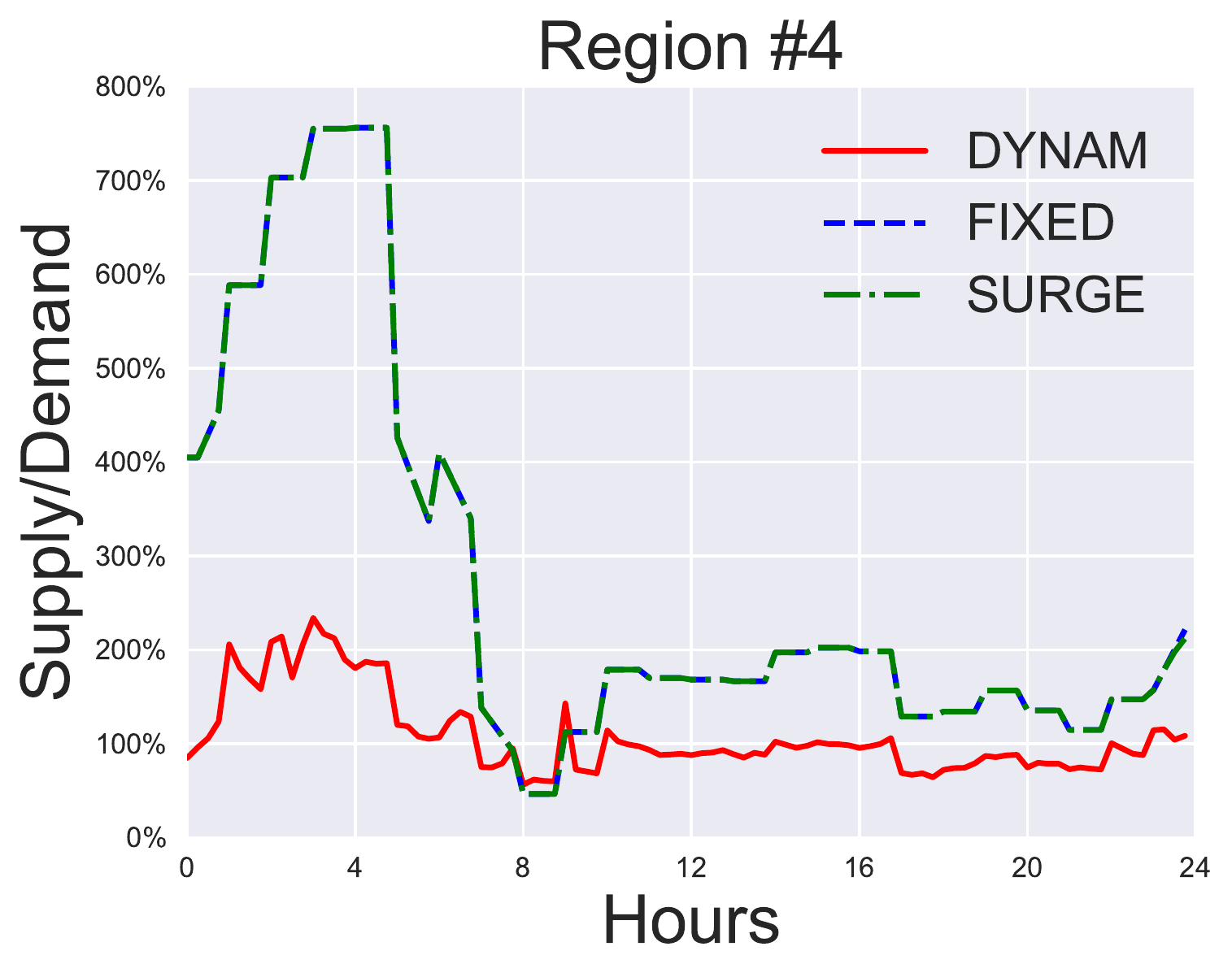}%
		\hfill %
		\includegraphics[width=0.1975\textwidth]{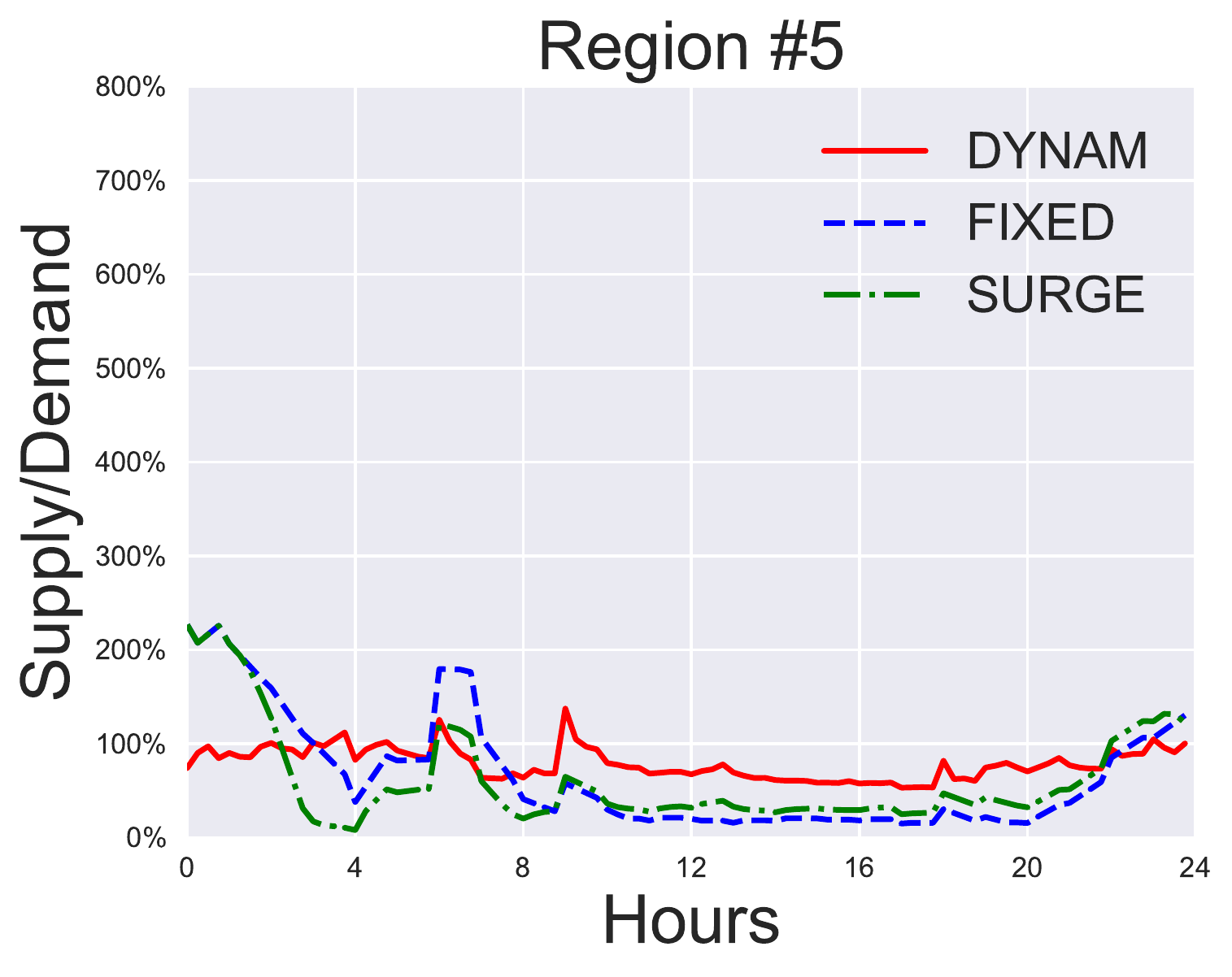}
		\caption{Instantaneous supply ratios for different regions.}
		\label{fig:demand-supply}
	\end{figure*}

\subsection{Static environment}\label{ssec:static_env}

  We first present the empirical analysis for the static environment, which is
  simpler than the dynamic environment that we will consider next, hence easier to
  begin with.

  In the static environment, we use the average of the statistics of all $21$ days as
  the inputs to our model. For example, the demand function $D(p | e)$ is
  estimated based on the frequencies and prices of the requests along edge $e$
  averaged over time. Similarly, the total supply of drivers is estimated based
  on the total durations of completed requests. 

  With the static environment, we can instantiate the convex program
  \eqref{prog:static} and solve via standard gradient descent algorithms. In our
  case, we simply use the MATLAB function \texttt{fmincon} to solve the convex
  program on a PC with Intel i5-3470 CPU. We did not apply any additional
  techniques to speed-up the computation as the optimization of running time is
  not the main focus of this paper. \autoref{sfig:static-iters} illustrates the
  convergence of the objective value (revenue) with increasing number of
  iterations, where each iteration roughly takes $0.2$ second.
  
  To compare the performance of policy \dynam with
  the benchmark policies \fixed and \surge, we also simulates them under the
  same static environment. In particular, the length of each timestep is set to
  be $15$ minutes and the number of steps in simulation is $96$ (so $24$ hours
  in total). For both \fixed and \surge, we use the per-minute price fitted from
  data as the base price, $\alpha = 0.5117$, and allow the surge ratio $\beta$
  to be in $[1.0, 5.0]$. To make the evaluations comparable, we use the
  distribution of drivers under the stationary solution of our convex program as
  the initial driver distributions for \fixed and \surge.
  \autoref{sfig:static-revs} shows how the instantaneous revenues evolve as the
  time goes by, where \dynam on average outperforms \fixed and \surge by roughly $24\%$ and $17\%$, respectively.

  Note that our policy \dynam is stationary under the static environment, the
  instantaneous revenue is constant (the red horizontal line). Interestingly,
  the instantaneous revenue curves of both \fixed and \surge are decreasing and
  the one of \fixed is decreasing much faster. The observation reflects that
  both \fixed and \surge are not doing well in dispatching the vehicles: \fixed
  simply never balances the supply and demand, while \surge shows better control
  in the balance of supply and demand because the policy seeks to balance the
  demand with local supply when supply can not meet the demand. However, neither
  of them really balance the global supply and demand, so the instantaneous
  revenue decrease as the supply and demand become more unbalanced.

  In other words, the empirical analysis supports our insight about the
  importance of vehicle dispatching in ride-sharing platforms.

\subsection{Dynamic environment}\label{ssec:dynamic_env}

  In the dynamic environment, the parameters (i.e., the demand functions and the total number of requests) are estimated based on the statistics of {\em each hour}
  but averaged over different days. For example, the demand functions $D_h(p | e)$ are defined for
  each edge $e$ and each of the $24$ hours, $h \in \{0, \ldots, 23\}$. In
  particular, we only use the data from the weekdays ($14$ days in
  total)\footnote{The reason that we only use data from weekdays is that the
  dynamics of demands and supplies in weekdays do have similar patterns but
  quite different from the patterns of weekends.} among the most popular $5$
  regions for the estimation.

  Again, we instantiate the convex program
  \eqref{eq:convex} for the dynamic environment and solve via the \texttt{fmincon}
  function on the same PC that we used for the static case.
  \autoref{sfig:dynamic-iters} shows the convergence of the objective value with
  increasing number of iterations, where each iteration takes less than $1$
  minute.

  We setup \fixed and \surge in exactly the same way as we did for the static
  environment, except that the initial driver distribution is from the solution
  of the convex program for dynamic environment.

  \autoref{sfig:dynamic-revs} shows the instantaneous revenue along the
  simulation. In particular, the relationship $\dynam \succ \surge \succ \fixed$
  holds almost surely. Moreover, the advantages of \dynam over the other two
  policies are more significant at the high-demand ``peak times''.
  For example, at $8$ a.m., \dynam (\texttt{$\sim$}$800$) outperforms \surge
  (\texttt{$\sim$}$600$) and \fixed (\texttt{$\sim$}$500$) by roughly $33\%$
  and $60\%$, respectively.

  \paragraph{Demand-supply balance}
  Balancing the demand and supply is not the goal of our dispatching
  policy. However, a policy without such balancing abilities are unlikely to perform well. In \autoref{fig:demand-supply}, we
  plot the {\em supply ratios} (defined as the local instantaneous supply
  divided by the local instantaneous demand) for all the $5$ regions during the
  $24$ hours of the simulation.

  From the figures, we can easily check that comparing with the other two lines,
  the red line (the supply ratio of \dynam) tightly surrounds the ``balance''
  line of $100\%$, which means that the number of available drivers at any time
  and at each region is close to the number of requests sent from that region at
  that time. The lines of other two policies sometimes could be very far from
  the ``balance'' line, that is, the drivers under policy \fixed and \surge are
  not in the location where many passengers need the service.

  As a result, our policy \dynam shows much stronger power in vehicle dispatching and balancing
  demand and supply in dynamic ride-sharing systems. Such advanced
  techniques in dispatching can in turn help the platform to gain higher revenue
  through serving more passengers.



\bibliographystyle{ACM-Reference-Format}
\bibliography{ref}

\end{document}